\documentclass[11pt, letterpaper]{article}
\title{Adversarial Low Degree Testing}
\author{
Dor Minzer\thanks{Department of Mathematics, Massachusetts Institute of Technology, Cambridge, USA. Supported by a Sloan Research
Fellowship, NSF CCF award 2227876 and 
NSF CAREER award 2239160.}
\and
Kai Zhe Zheng\thanks{Department of Mathematics, Massachusetts Institute of Technology, Cambridge, USA. Supported by the NSF GRFP.}
}
\date{\vspace{-5ex}}
\usepackage{fullpage}
\usepackage{amsmath,amssymb,amsfonts,nicefrac}
\usepackage{amsthm}
\usepackage{xspace}
\usepackage{color}
\usepackage{url}
\usepackage{hyperref}
\usepackage{enumitem}
\usepackage{bm}
\usepackage{bbm}
\usepackage{times}
\usepackage{mathtools}
\usepackage{enumitem}
\usepackage{algorithm}
\usepackage{subfig}
\usepackage{algpseudocode}
\usepackage{graphicx}

\DeclareMathOperator{\rej}{rej}

\DeclareMathOperator{\poly}{poly}
\DeclareMathOperator{\spa}{span}

\DeclareMathOperator{\RM}{RM}

\newcommand{\Ff}{\mathbb{F}}

\newcommand{\T}{\mathcal{T}}

\newcommand{\eps}{\varepsilon}

\renewcommand{\epsilon}{\eps}

\newcommand\inner[2]{\langle{#1},{#2}\rangle}

\renewcommand\leq{\leqslant}
\renewcommand\geq{\geqslant}


\theoremstyle{plain}
\newtheorem{theorem}{Theorem}[section]
   \newtheorem{thm}{Theorem}[section]
   
   \newtheorem{lemma}[thm]{Lemma}
   
   \newtheorem{claim}[thm]{Claim}
   
   \newtheorem{definition}{Definition}

\DeclareMathOperator\supp{supp}

\begin{document}
\maketitle

\begin{abstract}
    In the $t$-online-erasure model in property testing, an adversary is allowed to erase $t$ values of a queried function for each query the tester makes. This
    model was recently formulated by Kalemaj, Raskhodnikova and 
    Varma, who showed that the properties of linearity of functions 
    as well as quadraticity can be tested in 
    $O_t(1)$ many queries: $O(\log (t))$ for linearity and $2^{2^{O(t)}}$
    for quadraticity. They asked whether the more general 
    property of low-degreeness can be tested in the online erasure model, whether better testers exist for quadraticity, and if similar results hold when ``erasures'' are replaced with ``corruptions''.

    We show that, in the $t$-online-erasure model,
    for a prime power $q$, given query access to a function 
    $f: \mathbb{F}_q^n \xrightarrow[]{} \mathbb{F}_q$, one 
    can distinguish in $\poly(\log^{d+q}(t)/\delta)$ queries between the case that $f$ is degree at most $d$, and the case that $f$ is $\delta$-far from any degree $d$ function (with respect to the fractional hamming distance). 
    This answers the aforementioned questions and brings the query complexity to nearly match the query complexity of low-degree testing in the classical property testing model. 
    
    Our results are based on 
    the observation that the
    property of low-degreeness
    admits a large and 
    versatile family of
    query efficient testers. 
    Our testers operates by 
    querying a uniformly random, sufficiently 
    large set of points in a large enough affine subspace, and finding 
    a tester for low-degreeness that only
    utilizes queries from 
    that set of points.
    We believe that this tester may find other applications to algorithms in the online-erasure model or other related models, and may be of independent interest.
\end{abstract}

\section{Introduction}
The main purpose of this
paper is to further investigate the 
$t$-online-erasure model 
which was recently initiated by Kalemaj, Raskhodnikova, and Varma~\cite{KRV}. The motivation behind this model stems from scenarios where computations are performed on datasets that may contain erasures which can 
 occur in places that are chosen by an adversary but are limited in their number. 
For instance, consider an algorithm aiming to detect fraud, while an adversarial party deliberately erases data to conceal the fraudulent activity.

The focus of this paper is the problem of \emph{low degree testing} in the $t$-online erasure model. In this setting, an algorithm
is given query access to a function $f: \Ff_q^n \xrightarrow[]{} \Ff_q$, 
an erasure parameter $t$, a distance parameter $\delta$, and a degree parameter $d$. At the start of the computation, all entries in $f$'s truth table are unerased. An algorithm may query the value of $f(x)$ for $x \in \Ff_q^n$, but after each query, an adversary is allowed to \emph{erase} any $t$-entries of $f$'s truth table. If the algorithm attempts to query an entry $f(x)$ after it is erased, then $\perp$ is returned. The goal of the algorithm is to output accept with probability $1$ if $f$ is degree at most $d$ and to output reject with probability at least $2/3$ if $f$ is $\delta$-far from low degree. In the standard property testing literature, the former stipulation is often referred to as completeness, while the latter is often referred to as soundness. Algorithms accomplishing this task in the $t$-online erasure model are called \emph{$t$-online-erasure-resilient}.

Low degree testing is one of the most basic and well studied problems in property testing and can be traced all the way back to Blum, Luby, and Rubinfeld's seminal work on linearity testing \cite{BLR}. Linearity testing can be thought of as the $d = 1$ case of low degree testing, and after the work of \cite{BLR}, further  low degree tests were constructed and analyzed over the next decade - first for arbitrary degrees of univariate polynomials \cite{RSu}, and later for arbitrary degrees of multivariate polynomials \cite{AKKLR, KR, JPRZ, BKSSZ, HSS, RZS, KM, MZ}. These tests were also relevant within coding theory as polynomials (both univariate and multivariate) over finite fields are the basis of two of the most well known error-correcting codes - the Reed-Solomon and Reed-Muller Codes. Due to their efficient query complexity, low degree tests for multivariate polynomials also found many applications elsewhere in theoretical computer science - perhaps most notably to the construction Probabilistically Checkable Proofs \cite{FGLSS,  AroraSafra, ALMSS, RubinfeldSudan, AroraSudan}. The extensive work regarding low degree testing also led to the testing of other properties of functions over finite fields - an area often called algebraic property testing, \cite{KS}. 

Our work adds to the vast literature on low degree testing by giving the first such algorithms over all degrees and field sizes in the online erasure model. Previously, the only known results were due to \cite{KRV} who gave algorithms for linearity and quadraticity ($d=1,2$) cases over the field $\Ff_2$. 

We find that designing a tester in the online-erasure model requires one to depart significantly from classical testers. To see why, consider what happens when one tries to implement the Blum-Luby-Rubinfeld linearity tester of \cite{BLR} over $\Ff_2$. The BLR tester is based on the observation that any linear $f: \Ff_2^n \xrightarrow[]{} \Ff_2$ satisfies $f(x+y) = f(x) + f(y)$ for any $x, y \in \Ff_2^n$ and proceeds by choosing uniformly random $x, y \in \Ff_2^n$, querying $f(x), f(y), f(x+y)$, and finally checking if $f(x+y) = f(x) + f(y)$ is satisfied. The difficulties that arise in the online erasure model now become clear. After the algorithm queries $f(x)$ and $f(y)$, the adversary can erase the point $f(x+y)$ making it impossible to directly carry out the BLR test.

Faced with this difficulty, 
the authors of~\cite{KRV}
noted that one can attempt to ``overwhelm'' the adversary's erasure capacity with volume. One can first query a large number, say $m = 100t$, of uniformly random points, $x_1,\ldots, x_m$ so that there are $\binom{m}{2} \geq 4000t^2$ possible $x_i + x_j$'s. At this point the adversary could have erased at most $100t^2$ of these sums, making it possible to successfully query a random $x_i + x_j$, and perform the check $f(x_i+x_j) = f(x_i) + f(x_j)$. This idea is adapted and refined into a tester for linearity over $\Ff_2$ by \cite{KRV}.

Dealing with online erasures becomes significantly more complicated for higher degrees, namely for $d\geq 2$. The classical test here proceeds by choosing a $d+1$-dimensional affine space $V$ and then querying $f(x)$ for any $x \in V$, \cite{AKKLR}. Given the large number of points needed as well as the large number of dependencies between these points, it is no longer clear how to craft enough strategies to obtain an entire $d+1$-dimensional affine subspace against an adversary. Indeed, the 
tester of~\cite{KRV} in
the online erasure 
model is much more involved, 
and it is not clear how
to generalize it to 
larger fields or higher degrees.

\paragraph{Our approach.}
Our algorithm forgoes the task of mimicking
the classical low-degree 
testers by attempting
to query all of the points in a subspace of $\mathbb{F}_q^n$ altogether.
Instead, we show that given any suitable number of points inside of a fixed affine subspace of dimension large enough in terms of $d$ and $t$, one has enough points to design a degree $d$ tester. Thus, our algorithm circumvents the adversary by querying uniformly random points inside of some suitably large affine subspace. As it turns out, the adversary cannot foil our plan if there is no plan. 

To the best of our knowledge, our low degree tester using uniformly random points is new, even in the setting without erasures, and we believe that it may have other applications and be of independent interest.

\subsection{Our Results}

We present our testers for low degree polynomials over prime fields and non-prime fields separately. Although the testers are essentially the same, the analysis in the non-prime field case requires more care and must be handled separately. This also leads to slightly different parameters for our two testers. Formally, our main theorems are the following.

\begin{theorem} [Prime Field Case] \label{th: main p}
    Let $f: \Ff_p^n \xrightarrow[]{} \Ff_p$ be the input function over a prime-field vector space, $d$ be the degree parameter, and $\delta$ be the distance parameter. Then for $t \leq \frac{\delta}{30}p^{n/(20d)}$, there is a $t$-online-erasure resilient tester with query complexity    $O\left(\frac{\left(\log^{3d+3}(t/\delta)\right)}{\delta} \right)$
    satisfying:
    \begin{itemize}
        \item \textbf{Completeness:} If $f$ is degree $d$ then the algorithm outputs accept with probability $1$.
        \item \textbf{Soundness:} If $f$ is $\delta$-far from degree $d$ then the algorithm outputs reject with probability at least $2/3$.
    \end{itemize}
\end{theorem}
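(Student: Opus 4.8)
The plan is to implement the "no plan" strategy described in the introduction: query a large uniformly random set of points inside a fixed affine subspace $V \le \Ff_p^n$ of dimension $k$ chosen appropriately large in terms of $d$ and $t$, and then, *after* the queries return, search among the successfully-queried (unerased) points for a low-degree test certificate — i.e., an affine line (or more precisely a configuration of $p$ collinear points in an affine line of $V$, or a $\binom{d+p-1}{d}$-type Reed–Muller local constraint) all of whose points were returned, and check the corresponding linear identity. The first step is to recall/establish the classical characterization of degree-$\le d$ functions over $\Ff_p$ by vanishing of $(d+1)$-fold directional derivatives, equivalently by a local test on $(d+1)$-dimensional affine subspaces or, in the prime-field case, by the univariate fact that the restriction of $f$ to every affine line is a univariate polynomial of degree $\le d$ together with an appropriate multidimensional constraint; I would use whichever local constraint has the smallest support, since the support size controls how easily the adversary can erase it.

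The core of the argument is a volume/counting argument against the adversary, in the spirit of the $m = 100t$ idea from \cite{KRV} but pushed through for an arbitrary local constraint of bounded support. First I would fix $V$ of dimension $k = \Theta(d \log_p(t/\delta))$ (which is $\le n$ by the hypothesis $t \le \frac{\delta}{30} p^{n/(20d)}$), so that $|V| = p^k$ is polynomially large in $t/\delta$. I would query $m$ uniformly random points of $V$ where $m = \poly(\log(t/\delta))$ is chosen so that, with high probability, the $m$ queried points contain many — say $\Omega(m)$ — disjoint copies of the local test configuration: this is a second-moment / Janson-type computation using that each configuration has $O(1)$ points and $V$ is large, so a random $m$-subset spans many configurations and the pairwise dependence between configurations is controlled. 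Crucially the queries are issued *in a batch with a predetermined schedule*, so by the time the adversary has seen the first $j$ queries it has made at most $tj \le tm$ total erasures; since the number of test configurations available grows faster than the erasure budget can destroy them, at least one configuration survives with probability $\ge 1 - o(1)$. On that surviving configuration the algorithm performs the classical low-degree check.

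For completeness: if $f$ has degree $\le d$ then its restriction to $V$ still has degree $\le d$, every local configuration satisfies the identity, so the algorithm accepts with probability $1$ — the adversary is irrelevant. For soundness I need the self-correcting fact that if $f$ is $\delta$-far from degree $\le d$ on $\Ff_p^n$, then the restriction $f|_V$ to a uniformly random affine subspace $V$ of dimension $k$ is, with probability $\ge 5/6$ say, at least $\Omega(\delta)$-far from degree $\le d$ on $V$ (this uses $k$ large enough and a standard "distance is approximately preserved on random subspaces" lemma, which in turn relies on the robust analysis of the classical $(d+1)$-dimensional test, e.g. \cite{AKKLR, KR, BKSSZ}); and then that a uniformly random local configuration inside such a $V$ violates the test with probability $\Omega(\delta)$ (again robustness of the classical test). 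Since a single surviving configuration is itself (conditionally) close to uniform among all configurations in $V$ — here I must be slightly careful that the adversary's erasures do not bias *which* configuration survives, which I handle by ensuring the number of surviving configurations is large, $\Omega(m)$, so that a random one among them is still near-uniform — one repetition rejects with probability $\Omega(\delta)$, and $O(1/\delta)$ independent repetitions (each with its own fresh random $V$ and fresh batch of points, and the erasure budget accounted across the whole run) boost this to $\ge 2/3$. Multiplying the per-repetition query count $m \cdot (\text{configuration size}) = \poly(\log(t/\delta))$ by the $O(1/\delta)$ repetitions, and tracking the exponent of the logarithm through the second-moment bound (where the dependence on $d$ and the need for configurations of size $\approx d$ each enters), yields the stated $O\!\left(\log^{3d+3}(t/\delta)/\delta\right)$.

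The main obstacle I anticipate is the soundness direction, specifically decoupling the adversary from the test: naively, the adversary could try to erase exactly those configurations that would witness $f$'s distance from low degree while leaving the "good" ones intact, so one surviving configuration need not be representative. The resolution — and the technically delicate part — is to guarantee that a *constant fraction* (or at least $\omega(t \cdot m / |V|^{\Theta(1)})$-many) of the configurations survive, so that a uniformly random surviving one is statistically close to a uniformly random configuration in $V$, for which the robust soundness of the classical test applies. Making the three quantities — dimension $k$, number of queries $m$, and erasure budget $tm$ over the whole schedule — fit together so that this fraction is bounded below, while keeping $m$ only polylogarithmic in $t/\delta$, is where the bulk of the calculation lies; the exponent $3d+3$ is essentially the price of this balancing act together with the $\Theta(d)$-sized test configurations and the $\Theta(d \log_p(t/\delta))$-dimensional subspace.
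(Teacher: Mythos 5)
Your overall scaffolding is consistent with the paper: query uniformly random points inside a random affine subspace $V \cong \Ff_p^k$ with $k = \Theta(d\log_p(t/\delta))$, locate a low-degree test among the returned points, check it, repeat $O(1/\delta)$ times, and argue that the erasure budget is a negligible fraction of $|V|$. The erasure-handling step (union bound over queries, each uniform in $V$) is essentially correct and matches the paper.

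However, there is a genuine gap at the heart of your plan: the claim that $m = \poly\log(t/\delta)$ random points of $\Ff_p^k$ contain, with high probability, even a single \emph{classical} local test configuration (a line of $d+2$ collinear points, or over $\Ff_2$ a full $(d+1)$-dimensional affine flat). A first-moment count already rules this out. A random $(d+2)$-tuple in $\Ff_p^k$ is collinear with probability about $p^{-d(k-1)}$, so the expected number of such tuples among $m$ random points is roughly $m^{d+2}\,p^{-d(k-1)}$. With $k = \Theta(d\log_p(t/\delta))$ this forces $m \geq p^{\Omega(k)}$, i.e.\ polynomial in $t/\delta$ rather than polylogarithmic. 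The Janson/second-moment machinery you invoke only sharpens concentration around the mean; it cannot manufacture configurations when the mean itself is exponentially small. The same obstruction hits any constraint cut out by a small number of affine relations, because such constraints have exponentially small density among tuples of points, so the ``use whichever local constraint has smallest support'' hedge does not help. Consequently your worry about the adversary biasing which configuration survives is moot --- in this parameter regime there will not be any classical configurations to bias.

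The paper evades this entirely by abandoning classical configurations. The central observations (Lemmas~\ref{lm: sufficient for local char} and~\ref{lm: dual poly}) are: (i) any $h\colon \Ff_p^k \to \Ff_p$ of degree exactly $k(p-1)-(d+1)$ is, via the inner product $\langle f\circ T, h\rangle$, a local characterization of $\RM[n,p,d]$, with \emph{no constraint on its support}; and (ii) a dimension/pigeonhole argument shows that for \emph{any} set $S \subseteq \Ff_p^k$ with $|S| \geq \binom{d+k+1}{k}+1$, some such $h$ has $\supp(h)\subseteq S$. So once $Q(k)$ points are successfully queried, a valid test is \emph{deterministically} available on them --- no probabilistic search, no surviving-configuration count. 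This is what brings the query complexity down to $\poly(k) = \poly\log(t/\delta)$. Soundness is then also more direct than your route through distance preservation on random subspaces: the paper invokes Kaufman--Sudan (Theorem~\ref{th: local char to test}) to get rejection probability $\geq \delta/(4Q(k)^2)$ for any fixed local characterization over a random affine $T$, and repeats the basic test $O(Q(k)^2/\delta)$ times, which is also exactly where the exponent $3(d+1)$ in the query count comes from (three factors of $Q(k) \approx \log^{d+1}$: one for the queries per round and two from the number of repetitions).
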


\begin{theorem} [Non-Prime Field Case] \label{th: main q}
    Let $f: \Ff_q^n \xrightarrow[]{} \Ff_q$ be the input function, $d$ be the degree parameter, and $\delta$ be the distance parameter. Then for $t \leq \frac{\delta}{100}q^{\frac{n}{100(d+q)} - 1}$, there is a $t$-online-erasure resilient tester with query complexity $\frac{q^{O(1)}}{\delta} O\left(\left(\log^{3d+3q}(t/\delta)\right)\right)$
    satisfying:
    \begin{itemize}
        \item \textbf{Completeness:} If $f$ is degree $d$ then the algorithm outputs accept with probability $1$.
        \item \textbf{Soundness:} If $f$ is $\delta$-far from degree $d$ then the algorithm outputs reject with probability at least $2/3$.
    \end{itemize}
\end{theorem}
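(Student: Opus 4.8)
The plan is to establish the non-prime field case (Theorem~\ref{th: main q}) in parallel with the prime case, isolating the two extra ingredients that the non-prime setting requires. The overall architecture is the same as in Theorem~\ref{th: main p}: the tester fixes a large affine subspace $W \leq \Ff_q^n$ of dimension roughly $D = \Theta(d+q)\cdot\log_q(t/\delta)$, queries a uniformly random batch of $m = \poly(\log(t/\delta)) \cdot q^{O(1)}$ points inside $W$ (and maintains this invariant across the queries, replenishing the batch as the adversary erases), and then searches within the \emph{already-queried} point set for a configuration that certifies a local constraint of the degree-$d$ Reed--Muller code. Because the whole batch is read before any dependent point is needed, the adversary's $t$ erasures per step cannot target a "plan'' that does not yet exist; this is the same phenomenon exploited in the prime case. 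So I would first restate, with the $q$-dependent parameters, the two structural facts: (i) a random point of $W$ is $\delta/2$-far-witnessing with good probability when $f$ is $\delta$-far from degree $d$ (using that restricting to a random affine subspace of dimension $\geq$ some function of $d,q$ roughly preserves distance from $\RM_q(d)$), and (ii) inside a sufficiently large random sample of $W$ one can, with probability $\geq 1-o(1)$, find the local test configuration one needs.

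The key steps, in order: (1) Recall the local characterization of low-degree functions over $\Ff_q$. Here is the first place the non-prime case diverges: over $\Ff_p$ the degree-$d$ constraint is a single additive relation on a $(d+1)$-dimensional affine subspace (a "flat''), whereas over $\Ff_q$ with $q = p^k$ one must use the generalized characterization on affine subspaces of dimension $\Theta(d/(q - q/p) + 1)$ together with the directional-derivative / restriction-to-lines structure; cite the relevant earlier lemma in the paper for the exact constraint and its weight. (2) Show that within a uniformly random set $S \subseteq W$ of size $m$, one can find such a constraint whose support lies entirely in $S$: this is a balls-into-bins / second-moment argument counting the number of candidate low-dimensional flats spanned by points of $S$ versus the $t \cdot (\text{number of queries so far})$ points the adversary could have erased, and it is exactly here that the exponent $3d+3q$ (as opposed to $3d+3$) enters, since the relevant flats now have dimension $\Theta(d+q)$ rather than $d+1$. (3) Argue completeness: if $f \in \RM_q(d)$ every such constraint is satisfied, so the tester never rejects. (4) Argue soundness: if $f$ is $\delta$-far, then by (i) its restriction $f|_W$ is (say) $\delta/2$-far from $\RM_q(d)|_W$, hence a random hit configuration inside $S$ is violated with probability $\Omega(\delta/q^{O(1)})$ — the $q^{O(1)}$ loss coming from the larger local constraints — and repeating $O(q^{O(1)}\log(t/\delta)/\delta)$ independent rounds boosts this to $2/3$. (5) Track the query budget: each round uses $m = q^{O(1)}\poly(\log(t/\delta))$ queries, the number of rounds is $q^{O(1)}/\delta \cdot \poly(\log(t/\delta))$, and the constraint $t \leq \frac{\delta}{100} q^{n/(100(d+q)) - 1}$ is exactly what makes $D \leq n$ so that the ambient space contains the required subspace, and simultaneously ensures the adversary's total erasure budget over the run stays below the number of viable configurations.

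The main obstacle I expect is step (2) combined with the distance-preservation half of step (i): over non-prime fields the restriction of a function that is $\delta$-far from $\RM_q(d)$ to a random affine subspace of bounded dimension need not stay far unless the dimension is taken large enough \emph{relative to $q$}, and the known such statements (e.g. via the Haramaty--Shpilka--Sudan / Bhattacharyya--Kopparty--Schoenebeck--Sudan--Zuckerman line of work) carry $q$-dependent losses both in the dimension and in the surviving distance. Getting the bookkeeping so that the surviving distance is still $\Omega(\delta)$ while the subspace dimension is only $\Theta((d+q)\log_q(t/\delta))$ — small enough to fit in $\Ff_q^n$ under the stated bound on $t$ — is the delicate part, and it is what forces the weaker parameters ($3d+3q$ in the exponent and the extra $q^{O(1)}$ factor) compared to the prime case. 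A secondary subtlety is that the local constraints over $\Ff_q$ have larger support, so the "hitting'' event in step (2) is rarer and needs a correspondingly larger sample $m$; one must check that enlarging $m$ does not blow the query budget past $\frac{q^{O(1)}}{\delta}\log^{3d+3q}(t/\delta)$, which again uses the bound on $t$ to control how many times the batch must be replenished.
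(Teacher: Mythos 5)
Your high-level framework matches the paper's: query a large batch of uniformly random points inside a fixed $k$-dimensional affine image before committing to any dependent query, then search the batch for a certificate of a local degree-$d$ constraint, and repeat. However, the mechanism you propose for step (2)—finding the constraint—is the one the paper explicitly rejects, and I think it genuinely does not work. You propose to locate a classical flat-style test (an affine subspace of dimension $\Theta(d/(q-q/p)+1)$ together with its directional-derivative structure) whose support is fully contained in the random sample $S$, via a balls-into-bins/second-moment count of flats spanned by points of $S$. A $c$-dimensional affine flat in $\Ff_q^k$ has $q^c$ points, so the expected number of such flats fully contained in a random $S$ of size $m$ is of order $q^{(k-c)(c+1)}(m/q^k)^{q^c}$; for this to be nonvanishing with $k = \Theta((d+q)\log_q(t/\delta))$ one needs $m$ to be nearly all of $q^k$, which blows up both the query budget and the erasure analysis. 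This is exactly the ``large number of dependencies'' obstruction the paper points out for $d\geq 2$, and it is what makes the KRV quadraticity tester doubly exponential in $t$.

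The missing idea is the dual/inner-product view of low-degree testing: instead of hunting for a structured configuration inside $S$, one directly \emph{constructs} a local characterization $h:\Ff_q^k\to\Ff_q$ whose support is contained in $S$. Lemma~\ref{lm: sufficient local char q} gives an algebraic sufficient condition for $h$ to characterize $\RM[n,q,d]$ (low degree together with the presence of certain monomials $\prod_{i\leq s}x_i^{q/p-1}x_{s+1}^{q-1-r'}\prod_{j>s+1}x_j^{q-1}$), and Lemmas~\ref{lm: Schwartz Zippel Modification}--\ref{lm: solvable system} show that for a uniformly random $S$ of size $Q(k)=2q^{s+1}\log(q)\binom{k+d^{\star}}{d^{\star}}$ the linear system forcing $\langle h,x^{e'}\rangle=0$ for all low-degree $e'$ and $\langle h,x^{e^{\star}}\rangle=1$ is solvable except with probability $q^{-\binom{k+d^{\star}}{d^{\star}}}$, via Schwartz--Zippel plus a union bound over the $q^{\binom{k+d^{\star}}{d^{\star}}}$ low-degree competitors. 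This is a linear-algebraic argument that places no flat structure on $\supp(h)$, which is why $|S|$ only needs to be polynomial in $k$.

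Your step (i) (distance preservation under restriction to a random subspace) is also not how the paper handles soundness, and the concern you flag about $q$-dependent losses there is sidestepped entirely. Since the basic tester draws $T\in\T_{n,k}$ and $S$ independently, conditioning on the chosen $h\in\mathcal{H}(S)$ leaves $T$ uniform, and Kaufman--Sudan (Theorem~\ref{th: local char to test}) directly gives $\Pr_T[\langle f\circ T,h\rangle\neq 0]\geq\delta/(4|\supp(h)|^2)\geq\delta/(4Q(k)^2)$ for every fixed local characterization $h$. No separate dimension-reduction or far-from-$\RM$-is-preserved lemma is invoked, and the $q^{O(1)}$ and $\log^{3d+3q}$ factors come from $Q(k)$ (the sample size per round, reflecting $d^{\star}\leq d+q$ and $s+1$) and the $100Q(k)^2/\delta$ repetitions, not from any restriction-distance loss.
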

In the $p=2$ and $d = 1$ case, our algorithm is essentially the same as that of \cite{KRV}. Furthermore, we remark that our testers also hold with two sided error in the $t$-online corruption model with two sided error. In this model, the adversary may change the value of points $f(x)$ instead of erasing them. Our algorithms are still effective in the online-corruption because no matter what points the adversary chooses to erase, the probability that our algorithm queries a corrupted point is bounded by a small constant. Therefore the corruptions only induce a small additive error to the completeness and soundness cases. Our result for corruptions follows from a reduction from~\cite{KRV}, showing that online erasure-resilience implies online corruption-resilience if the probability of querying a corrupted point is small (see~\cite[Lemma 1.8]{KRV} 
for a precise statement).

\begin{theorem}
    The algorithms of Theorems~\ref{th: main p} and ~\ref{th: main q} are also $t$-online-corruption resilient and satisfy:
    \begin{itemize}
        \item \textbf{Completeness:} If $f$ is degree $d$ then the algorithm outputs accept with probability $2/3$.
        \item \textbf{Soundness:} If $f$ is $\delta$-far from degree $d$ then the algorithm outputs reject with probability at least $2/3$.
    \end{itemize}
\end{theorem}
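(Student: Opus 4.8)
The plan is to obtain this theorem from Theorems~\ref{th: main p} and~\ref{th: main q} via the generic erasure-to-corruption reduction of Kalemaj, Raskhodnikova and Varma~\cite[Lemma 1.8]{KRV}. Recall that this lemma asserts, roughly, that if $T$ is a $t$-online-erasure-resilient tester for a property $P$ carrying the extra guarantee that, on every input and against every adversary, $T$ reads an already-tampered cell with probability at most a small constant $\gamma$, then (after a constant-factor overhead in queries that first boosts the soundness of $T$ to $2/3+\gamma$ while keeping its perfect completeness) there is a $t$-online-corruption-resilient tester for $P$ with completeness and soundness both at least $2/3$. The reason is that a corruption adversary can be mirrored by the erasure adversary that erases exactly the cells the corruption adversary tampers with, step by step; conditioned on the event that $T$ never reads one of these cells -- an event of probability at least $1-\gamma$ -- the transcripts in the two worlds are identical, so the completeness (resp.\ soundness) of $T$ in the corruption model is at least its completeness (resp.\ soundness) in the erasure model minus $\gamma$. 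Hence it suffices to verify that the testers of Theorems~\ref{th: main p} and~\ref{th: main q} have this ``reads a tampered cell with small probability'' property.

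This is exactly where the structure of our testers -- querying uniformly random points of a fixed, large affine subspace -- pays off. Our tester fixes an affine subspace $V \subseteq \Ff_q^n$ of dimension $\Omega(n)$ (the precise value coming from the proofs of Theorems~\ref{th: main p} and~\ref{th: main q}), queries $Q = \poly(\log^{3d+3q}(t/\delta)/\delta)$ points that are each, marginally, uniformly distributed on $V$, and then runs a low-degree test that is a deterministic function of the values it has already read; in particular it makes no queries whose \emph{location} depends on previously read \emph{values}. Fix any adversary. When the tester issues its $i$-th query, the adversary has tampered with at most $(i-1)t \le Qt$ cells of $V$, and since the $i$-th query is uniform on $V$, it lands on a tampered cell with probability at most $Qt/|V|$. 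A union bound over the $Q$ queries shows the probability that the tester ever reads a tampered cell is at most $Q^2 t/|V|$. A routine computation using the bounds $t \le \tfrac{\delta}{30}p^{n/(20d)}$ in the prime case and $t \le \tfrac{\delta}{100}q^{n/(100(d+q))-1}$ in the non-prime case, together with $\dim V = \Omega(n)$ (and noting that in the degenerate regime where $q^n$ fails to dominate $Q^2 t/\gamma$ the tester can simply read the entire truth table), gives $|V| \gg Q^2 t/\gamma$, so this probability is at most $\gamma$. Plugging this into~\cite[Lemma 1.8]{KRV} yields the theorem.

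I expect the only point requiring genuine care is checking that \emph{every} query of our testers is marginally uniform -- or at least sufficiently spread out -- over $V$, which is what legitimizes the union bound. This is immediate for the initial batch of i.i.d.\ uniform points; one must additionally confirm that any further points examined while executing the low-degree test found among the batch are also (near) uniform on $V$, or else simply include all potentially needed points in the initial uniform batch so that the claim holds by construction. The remaining ingredients -- the mirroring argument behind the reduction and the numerical check that $|V|$ dominates $Q^2 t$ -- are routine given the parameters already established for Theorems~\ref{th: main p} and~\ref{th: main q}.
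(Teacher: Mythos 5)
Your proposal is correct and takes the same route as the paper: both invoke the erasure-to-corruption reduction of~\cite[Lemma 1.8]{KRV}, combined with the bound (already established in the proofs of Theorems~\ref{th: main p} and~\ref{th: main q}) that the probability of ever querying a corrupted cell is at most a small constant. One factual correction to your description of the testers: they do not fix a single affine subspace $V$ of dimension $\Omega(n)$ for all queries; rather, each repetition of the basic test draws a fresh uniformly random affine map $T \in \T_{n,k}$ (and a fresh random $S \subseteq \Ff_q^k$) with $k = \Theta\bigl(d\log(t/\delta)\bigr)$ in the prime case and $k = \Theta\bigl((d+q)\log_q(t/\delta)\bigr)$ in the non-prime case, and the union bound only needs $q^k$ to dominate $t$ times the square of the total query count, which is guaranteed by the choice of $k$ — the hypothesis on $t$ serves merely to make $k \leq n$ feasible, not to force $k = \Omega(n)$.
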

In the online corruption model, one can never hope for perfect completeness as there is always a nonzero probability that a large enough number of queries made are corrupted, and there is no way for the algorithm to tell if a query has been corrupted or not. In contrast, in the erasure model, the algorithm can always tell if a query is erased as $\perp$ is returned.

\subsection{Proof Overview} The starting point of our algorithm is the observation that the degree of a function when restricted to a $k$-dimensional affine subspace can be tested using \emph{any} set of points of size only polynomial in $k$. This result is obtained by considering inner product testers and reviewing properties of linear affine-invariant functions \cite{KS}, which are explained in Section~\ref{sec: preliminaries}. Specifically, we rely on the fact that a degree $d$ tester for functions over prime fields is sound if there exists even a single degree $d+1$ monomial that it can reject. Thus our task now becomes designing a tester that accepts all degree at most $d$ monomials while rejecting at least one degree $d+1$ monomial. The view of testing as taking the inner product with functions that are dual to the desired property, as in is done in \cite{RZS, MZ}, turns out to be useful for this task. This further reduces our algorithm to the following algebraic observation: for any $S \subset \Ff_q^k$ of size roughly $k^d$, there is a function $h: \Ff_q^k \xrightarrow[]{} \Ff_q$ whose support is contained within $S$ such that the inner product of $h$ with any monomial of degree at most $d$ is $0$, and the inner product of $h$ with some specific degree $d+1$ monomial is nonzero. 
This fact follows essentially follows by counting arguments/ dimension arguments, and armed with 
it the algorithm follows rather naturally.

The case of non prime fields 
is a bit harder but is 
very similar in spirit. Here, 
we once again show that testing
low-degreeness over $\mathbb{F}_q$ admits a versatile and large enough
family of inner product testers, enough so that 
we can fit a test through every large enough set of 
points.

\subsection{Related Work} \label{sec: related work}
Prior to this work, the $t$-online erasure model was presented and studied in \cite{KRV}. In \cite{KRV}, it is shown that the query complexity of linearity testing over $\Ff_2$ with erasure parameter $t$ is $\Theta(\log(t))$, and that quadraticity testing can be accomplished using a number of queries that is doubly exponential in $t$. As such, the quadraticity testing result only applies for constant $t$ and furthermore does not apply in the online corruption model. Other properties including the sorted and lipschitz properties of sequences are also considered in \cite{KRV} as examples that cannot be tested in the online-erasure model, even for $t=1$.

The online erasure model we study is also similar to the offline erasure model of Dixit, Raskhodnikova, Thakurta, and Varma \cite{DRTV} where all erasures are made by the adversary at the start of the computation, as well as the tolerant testing model, where there is the additional requirement that functions close to the desired property should be accepted with constant probability \cite{PRR}. 

In a concurrent work, Ben-Eliezer, Kelman, Meir and Raskhodnikova~\cite{BKMR} investigate the online erasure model in the $q = 2$ case. Of interest to this paper are the following results. For linearity testing ($d = 1$), they obtain a tester with query complexity matching the lower bound in \cite{KRV}. For general degree, $d$, they show that any tester requires at least  $\Omega\left( \log^d(t) \right)$ queries. 

For constant distance parameter $\delta$, in the $q=2$ case, our tester in Theorem~\ref{th: main p} achieves query complexity $O(\log^{3d+3}(t))$. Thus, when compared to the lower bound of ~\cite{BKMR} the polynomial dependence in $\log^d(t)$ is optimal, but improving the exponent to match the lower bound is an intriguing open question. We remark that we did not attempt to optimize the query complexity of our testers and instead focused on obtaining a tester with $\poly(\log^d(t))$ query complexity that also worked for all degrees and field sizes. Optimizing just the analysis in this paper, however, will not close the gap to the lower bound. Instead, what appears to be missing is an ``optimal analysis'' of the soundness of our basic tester (see Algorithm~\ref{alg: basic tester p}), that makes $O(\log^{d+1}(t))$ queries. In this paper, we rely on a generic result from \cite{KS} to obtain soundness roughly $\Omega(\frac{1}{\log^{2d+2}(t)})$, which in turn requires us to repeat the basic test $O(\log^{2d+2}(t))$-times in order to obtain constant soundness. This ultimately results in $O(\log^{3d+3}(t))$ queries. If instead, we had an optimal analysis, similar to what is achieved for the testers in \cite{MZ}, then the soundness of the basic test would be $\Omega(1)$, and only $O(1)$ repetitions would be needed. Unfortunately the analysis of \cite{MZ} does not apply to our testers.

\section{Preliminaries} \label{sec: preliminaries}
\subsection{Notations}
For a prime power $q = p^{\ell}$, let $\Ff_q$ denote the finite field with $q$ elements and $\Ff_q^{*}$ denote the nonzero elements of $\Ff_q$. When $\ell = 1$, then $q = p$ and we refer to $\Ff_q = \Ff_p$ as a prime field. For a fixed field, $\Ff_q$, let $\T_{n, k}$ denote the set of affine transformations $T: \Ff_q^{n} \xrightarrow[]{} \Ff_q^{k}$. Though the definition of $\T_{n,k}$ also depends on a field parameter $q$, we will often drop it from the notation and it will always be clear from the context. For $T \in \mathcal{T}_{n,k}$ let $f \circ T$ denote the function over $\Ff_q^k$ given by $f \circ T(x) := f(T(x))$, for $x \in \Ff_q^k$.

For $\alpha \in \Ff_q^n$, and $e \in \{0,\ldots, q-1\}^n$, let $\alpha^e = \prod_{i=1}^n \alpha_i^{e_i}$, where $\alpha_i$ and $e_i$ are the $i$th coordinates of $\alpha$ and $e$ respectively. Let $x^e$ denote the monomial $\prod_{i=1}^{n}x_i^{e_i}$ where the $x_i$'s are variables and the number of variables parameter $n$ will be clear from context. The degree of $x^e$ is $\deg(x^e) = |e|_1 = \sum_{i=1}^n e_i$. For a function $f: \Ff_q^n \xrightarrow[]{} \Ff_q$, recall that it has a unique polynomial expression, $f(x) = \sum_{e \in \{0,\ldots, q-1\}^n} C_e x^{e}$, where each coefficient $C_e \in \Ff_q$. The degree of $f$ is the maximum degree of a monomial appearing in its expansion, or equivalently $\deg(f) = \max_{e: C_e \neq 0} |e|_1 $. For each $C_e \neq 0$, we say that the monomial $x^e$ \textit{appears} in $f$ or that $f$ \emph{contains} $x^e$. 

As mentioned, the set of degree $d$ polynomials is often referred to as the degree $d$ Reed-Muller Code. We use the notation $\RM[n,q,d] = \{f: \Ff_q^n \xrightarrow[]{} \Ff_q \; | \; \deg(f) \leq d \}$. The notion of distance that we use is fractional hamming distance. That is, for two functions $f,g$ over $\Ff_q^n$, define $\delta(f,g) = \Pr_{x \in \Ff_q^n}[f(x) \neq g(x)]$.
The distance of $f$ to degree $d$, is $\delta_d(f) = \min_{g \in \RM[n,q,d]} \delta(f,g)$,
and we say that $f$ is $\delta$-far from degree $d$ if $\delta_d(f) \geq \delta$.

\subsection{Properties of Linear Affine Invariant Families}
We require a few basic facts about linear, affine invariant families of functions, which we present in this section. These results can all be found or derived from \cite{KS}. For a family of polynomials $\mathcal{F}$, let \[
\supp({\mathcal{F}}) = \{x^e \; | \; e \in \{0,\ldots, q-1\}^n, \; \exists f \in \mathcal{F} \text{ containing } x^e  \}\] 
denote the set of monomials that appear in at least one of these polynomials. It is well known that these monomials form a basis for $\mathcal{F}$ if $\mathcal{F}$ is \textit{linear} and \textit{affine-invariant}. These notions are defined as follows.

\begin{definition}
    A family of functions, $\mathcal{F} \subset \{f: \Ff_q^n \xrightarrow[]{} \Ff_q\}$ is linear if 
    \begin{itemize}
        \item $0 \in \mathcal{F}$, where $0$ is the constant zero function.
        \item For any $\alpha, \beta \in \Ff_q$, if $f, g \in \mathcal{F}$, then $\alpha \cdot f + \beta \cdot g \in \mathcal{F}$.
    \end{itemize}
\end{definition}

\begin{definition}
    A family of functions, $\mathcal{F} \subset \{f: \Ff_q^n \xrightarrow[]{} \Ff_q\}$, is called affine invariant if it is closed under compositions with affine transformations. That is, if $f \in \mathcal{F}$ and $T \in \T_{n,n}$, then $f \circ T \in \mathcal{F}$.
\end{definition}

Linear, affine-invariant function families have the following nice properties. 
Towards this end, given a family of function 
$\mathcal{F}$, we denote by 
$\supp(\mathcal{F})$ the set of monomials that appear in some function in $\mathcal{F}$. 
The following property asserts that affine invariant codes are defined by a set of monomials .

\begin{lemma} [Monomial Extraction Lemma \cite{KS}] \label{lm: monomial extraction}
    If $\mathcal{F}$ is a linear, affine-invariant family of polynomials then $\mathcal{F} = \spa(\supp(\mathcal{F}))$.
\end{lemma}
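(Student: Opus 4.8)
The plan is to prove the two inclusions separately. The inclusion $\mathcal{F} \subseteq \spa(\supp(\mathcal{F}))$ is immediate: every $g = \sum_e C_e x^e \in \mathcal{F}$ is, by the very definition of $\supp$, an $\Ff_q$-linear combination of monomials lying in $\supp(\mathcal{F})$. The substance is the reverse inclusion, and since $\mathcal{F}$ is linear it is enough to show that every individual monomial $x^{e^*} \in \supp(\mathcal{F})$ already lies in $\mathcal{F}$. So I fix such an $e^* \in \{0,\dots,q-1\}^n$ together with a witness $g = \sum_e C_e x^e \in \mathcal{F}$ with $C_{e^*} \neq 0$, and the task becomes to ``extract'' the single monomial $x^{e^*}$ from $g$ using only operations that keep us inside $\mathcal{F}$.

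The key tool is a family of \emph{coordinate-wise degree projections}. For $i \in [n]$ and $a \in \Ff_q$, let $T_{i,a} \in \T_{n,n}$ be the affine map that multiplies the $i$-th coordinate by $a$ and fixes the rest; note that $h \circ T_{i,a} = \sum_e C_e(h)\, a^{e_i}\, x^e$ for every function $h$, where $C_e(h)$ denotes the coefficient of $x^e$ in $h$ (with $0^0 = 1$). I would then show that for each coordinate $i$ and each $k \in \{0,\dots,q-1\}$ there exist scalars $(\mu_a)_{a \in \Ff_q}$ in $\Ff_q$, depending only on $k$, such that the operator $P_i^{(k)}(h) := \sum_{a \in \Ff_q} \mu_a\,(h \circ T_{i,a})$ is exactly the projection $h \mapsto \sum_{e : e_i = k} C_e(h)\, x^e$ onto the part of $h$ of $x_i$-degree $k$. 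Plugging in the formula above, the required property of the $\mu_a$ is just $\sum_{a \in \Ff_q} \mu_a a^{j} = \ind[j=k]$ for all $j \in \{0,\dots,q-1\}$, i.e.\ that the $q \times q$ matrix $(a^{j})_{j,\,a}$ is invertible; and it is, because $1, x, \dots, x^{q-1}$ are linearly independent as functions $\Ff_q \to \Ff_q$ (no nonzero polynomial of degree at most $q-1$ vanishes on all of $\Ff_q$). Since each $P_i^{(k)}$ is an $\Ff_q$-linear combination of maps of the form $h \mapsto h \circ T$ with $T \in \T_{n,n}$, and $\mathcal{F}$ is linear and affine-invariant, each $P_i^{(k)}$ sends $\mathcal{F}$ into $\mathcal{F}$.

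To conclude, apply the composition $P_n^{(e^*_n)} \circ \cdots \circ P_1^{(e^*_1)}$ to the witness $g$. Every intermediate result stays in $\mathcal{F}$, and after all $n$ projections the only surviving monomials $x^e$ are those with $e_i = e^*_i$ for all $i$ --- that is, the result equals $C_{e^*} x^{e^*}$. Dividing by $C_{e^*} \neq 0$ yields $x^{e^*} \in \mathcal{F}$, which gives $\spa(\supp(\mathcal{F})) \subseteq \mathcal{F}$ and finishes the proof. (The monomials in $\supp(\mathcal{F})$ are automatically linearly independent, so they in fact form a basis of $\mathcal{F}$.)

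The step I expect to require the most care is the construction of $P_i^{(k)}$ in the cases $k \in \{0, q-1\}$, which is exactly where intuition from characteristic zero or large fields is misleading: over $\Ff_q$ the functions $x_i^0$ and $x_i^{q-1}$ agree on $\Ff_q^*$, so \emph{invertible} affine maps of the $i$-th coordinate --- scalings and translations --- cannot tell these two apart (restricting $(a^{j})_{j,\,a}$ to $a \in \Ff_q^*$ collapses the rows $j=0$ and $j=q-1$). The point that makes everything go through is that $\T_{n,n}$, as defined here, consists of \emph{all} affine maps $\Ff_q^n \to \Ff_q^n$, including degenerate ones such as $x_i \mapsto 0$; including $a = 0$ in the sum $\sum_a \mu_a\,(h \circ T_{i,a})$ is precisely what restores the full invertible $q \times q$ system. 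Once this is granted, the remainder is routine linear algebra and bookkeeping.
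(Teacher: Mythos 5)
The paper cites this lemma to Kaufman--Sudan \cite{KS} and does not include a proof, so there is no in-paper argument to compare against; I am therefore only checking correctness. Your proof is correct. The coordinate-wise degree projections $P_i^{(k)}$ built from the scaling maps $T_{i,a}$ via the invertibility of the $q \times q$ Vandermonde matrix $(a^j)_{j \in \{0,\dots,q-1\},\, a \in \Ff_q}$ are exactly the right extraction tool: the formula $h \circ T_{i,a} = \sum_e C_e(h)\,a^{e_i} x^e$ is valid (with $0^0=1$), each $P_i^{(k)}$ is an $\Ff_q$-linear combination of maps $h \mapsto h\circ T$ with $T \in \T_{n,n}$ and hence preserves $\mathcal{F}$, and composing $P_1^{(e^*_1)},\dots,P_n^{(e^*_n)}$ leaves exactly $C_{e^*}x^{e^*}$. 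You are also right to flag the role of the degenerate map $a=0$: since $a^0 = a^{q-1}$ for all $a \in \Ff_q^*$, scalings by nonzero scalars alone cannot separate $e_i = 0$ from $e_i = q-1$, and the paper's definition of $\T_{n,n}$ as \emph{all} affine maps (not just invertible ones) is what makes the full Vandermonde system available. This is the standard argument for the monomial extraction lemma as in \cite{KS}.
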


The next property is called $p$-shadow closed, and requires some additional notions. For two integers $a, b \in \{0, \ldots, q-1 \}$, let $a = \sum_{i=0}^{k-1}p^ia_i$ and $b = \sum_{i=0}^{k-1}p^ib_i$ be their base $p$ representations. We say $a$ is in the $p$-shadow of $b$ if $a_i \leq b_i$ for $i = 0, \ldots, k-1$, and denote this by $a \leq_p b$. Then for two exponent vectors $e = (e_1, \ldots, e_n)$ and $e' = (e'_1, \ldots, e'_n)$, we say $e \leq_p e'$ if $e_i \leq_p e'_i$ for every $i$. Linear, affine-invariant families of polynomials have the following shadow closed property.

\begin{lemma} \label{lm: shadow closed}
    Let $\mathcal{F} = \spa(\supp(\mathcal{F}))$ be a linear, affine invariant family of polynomials. If $e \leq_p e'$ and $e' \in \supp(\mathcal{F})$, then $e \in \supp(\mathcal{F})$ as well.
\end{lemma}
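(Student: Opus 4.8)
The plan is to prove the statement for a \emph{single} monomial and then propagate it through $\mathcal{F}$ using only translations. Since $\mathcal{F} = \spa(\supp(\mathcal{F}))$ and $e' \in \supp(\mathcal{F})$, the monomial $x^{e'}$ itself lies in $\mathcal{F}$. For every $c \in \Ff_q^n$ the translation $T_c : x \mapsto x+c$ is an affine transformation in $\T_{n,n}$, so affine invariance gives $x^{e'} \circ T_c \in \mathcal{F}$, and then linearity gives that the subspace
$$ W \;:=\; \spa_{\Ff_q}\{\, x^{e'}\circ T_c \;:\; c \in \Ff_q^n \,\} $$
is contained in $\mathcal{F}$. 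It therefore suffices to show $x^e \in W$, since $x^e \in \mathcal{F}$ immediately yields $e \in \supp(\mathcal{F})$.

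First I would reduce to the case where $e$ and $e'$ agree in every coordinate but one: writing a general $e \leq_p e'$ as the endpoint of a chain of exponent vectors each obtained from the previous by lowering a single coordinate within its $p$-shadow, and relabeling coordinates, it is enough to treat $e_i = e'_i$ for $i \geq 2$ and $e_1 \leq_p e'_1$. Restricting the translation to the first coordinate, $T_c : x_1 \mapsto x_1 + c$ with $c \in \Ff_q$, the binomial theorem gives
$$ x^{e'}\circ T_c \;=\; (x_1+c)^{e'_1}\prod_{i \geq 2} x_i^{e'_i} \;=\; \sum_{j=0}^{e'_1} \binom{e'_1}{j}\, c^{\,e'_1-j}\, x_1^{j}\prod_{i\geq 2} x_i^{e'_i}, $$
and by Lucas's theorem $\binom{e'_1}{j} \not\equiv 0 \pmod p$ precisely when $j \leq_p e'_1$; in particular $\binom{e'_1}{e_1} \neq 0$ in $\Ff_q$. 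To isolate the term $x_1^{e_1}\prod_{i\geq 2}x_i^{e'_i}$, pick any $e'_1+1$ distinct elements $c \in \Ff_q$ (possible since $e'_1 \leq q-1$); the matrix $\big(c^{\,e'_1-j}\big)_{c,\;0\le j\le e'_1}$ is a square Vandermonde matrix in these nodes with exponents $0,\dots,e'_1$, hence invertible. Consequently some $\Ff_q$-linear combination of the polynomials $\{x^{e'}\circ T_c\}$ equals $\binom{e'_1}{e_1}\, x_1^{e_1}\prod_{i\geq 2}x_i^{e'_i}$, and dividing by the nonzero scalar $\binom{e'_1}{e_1}$ shows $x^e \in W \subseteq \mathcal{F}$, so $e \in \supp(\mathcal{F})$.

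The only genuinely delicate point is the bookkeeping around Lucas's theorem: one must check that the relevant binomial coefficients are nonzero modulo $p$ exactly along the $p$-shadow, and that the exponents $e'_1 - j$ remain in $\{0,\dots,q-1\}$ so the Vandermonde step applies verbatim; the rest is the routine verification that translations are affine and that linear, affine-invariant families are closed under the operations used. One could equivalently run the argument in one shot over all $c \in \Ff_q^n$ at once --- $W$ then equals $\spa\{x^f : f \leq_p e'\}$ because, after dividing out the nonzero Lucas coefficients, the functions $c \mapsto c^{\,e'-f}$ for distinct $f \leq_p e'$ are distinct monomials of individual degree $<q$ and hence $\Ff_q$-linearly independent --- but the single-coordinate reduction keeps the linear algebra to an honest Vandermonde matrix and is cleaner to write.
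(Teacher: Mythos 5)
Your proof is correct, but the Vandermonde step is doing more work than the statement requires, and the paper's own technique for the sibling Lemma~\ref{lm: monomial shift} shows the shortcut. Recall that $\supp(\mathcal{F})$ is defined as the set of monomials \emph{appearing} in some function of $\mathcal{F}$; you do not need to realize $x^e$ exactly as an element of $\mathcal{F}$, only to exhibit one function in $\mathcal{F}$ that contains it. After your single-coordinate reduction, take any \emph{one} nonzero $c \in \Ff_q$: the function $x^{e'} \circ T_c \in \mathcal{F}$ contains $x_1^{e_1}\prod_{i\geq 2}x_i^{e'_i}$ with coefficient $\binom{e'_1}{e_1}c^{e'_1-e_1}$, which is nonzero by Lucas and $c\neq 0$, so $e \in \supp(\mathcal{F})$ immediately. (And if you then want $x^e\in\mathcal{F}$, the hypothesis $\mathcal{F}=\spa(\supp(\mathcal{F}))$ hands it to you for free; it is exactly the Monomial Extraction Lemma~\ref{lm: monomial extraction}, so you are re-deriving it with the Vandermonde inversion rather than invoking it.) That said, the Vandermonde argument is sound as written — you correctly check that the nodes $c$ are distinct, that $e'_1+1 \le q$, and that $\binom{e'_1}{e_1}\neq 0$ in $\Ff_q$ — so nothing fails; it is simply heavier than needed.

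A remark on technique: the paper does not include a proof of Lemma~\ref{lm: shadow closed} (it defers to Kaufman--Sudan), but its proof of Lemma~\ref{lm: monomial shift} uses the \emph{shear} $T(x)=(x_1,x_1+x_2,x_3,\dots)$ rather than a translation, and then reads off the desired monomial from the binomial expansion and cites Lemma~\ref{lm: monomial extraction}. Your translation-based argument is a genuine alternative that produces the same binomial expansion; the shear is the natural choice when you want to move degree between two coordinates (Lemma~\ref{lm: monomial shift}), while a translation is the natural choice when you want to drop degree in a single coordinate (this lemma). Both are affine maps, so both are admissible, and the coefficient bookkeeping via Lucas is identical. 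Your proof also silently relies on the chain/relabeling reduction being well-founded — worth a sentence that each step of the chain keeps the exponent vector in $\{0,\dots,q-1\}^n$ and preserves the $p$-shadow relation, so induction applies — but that is straightforward.
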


In the case where $q = p$, the shadow lemma simplifies to the following statement.
\begin{lemma} \label{lm: shadow closed p}
    Let $\mathcal{F} = \spa(\supp(\mathcal{F}))$ be a linear, affine invariant family of polynomials over a prime field vector space. If $x^e \in \supp(\mathcal{F})$ and each entry of $e - e'$ is nonnegative, then $x^{e'} \in \supp(\mathcal{F})$.
\end{lemma}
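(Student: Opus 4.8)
The plan is to obtain this as an immediate specialization of Lemma~\ref{lm: shadow closed}. The key observation is that when $q = p$, every exponent coordinate lies in $\{0,\ldots,p-1\}$, so the base-$p$ expansion of $e_i$ consists of the single digit $e_i$ itself. Consequently, for single-digit integers $a,b \in \{0,\ldots,p-1\}$, the relation $a \leq_p b$ (which by definition requires $a_0 \leq b_0$) is simply the ordinary inequality $a \leq b$. Extending this coordinatewise, $e' \leq_p e$ holds exactly when $e'_i \leq e_i$ for every $i$, i.e.\ exactly when every entry of $e - e'$ is nonnegative.

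With this equivalence in hand, I would finish as follows. Assume $x^e \in \supp(\mathcal{F})$ and that every entry of $e - e'$ is nonnegative; then $e' \leq_p e$ by the observation above. Since $x^e$ appears in some function of $\mathcal{F}$, we have $e \in \supp(\mathcal{F})$. Applying Lemma~\ref{lm: shadow closed} with the pair $(e', e)$ playing the roles of $(e, e')$ in that statement gives $e' \in \supp(\mathcal{F})$, that is, $x^{e'} \in \supp(\mathcal{F})$, which is what we wanted.

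There is essentially no obstacle here, so this is more of a sanity check than a proof. The only points to be mildly careful about are matching the orientation of the $p$-shadow order with the ``entrywise-nonnegative-difference'' phrasing — the monomial guaranteed to lie in $\supp(\mathcal{F})$ is the \emph{divisor} $x^{e'}$, not the multiple — and verifying that the single-digit instance of $\leq_p$ collapses to the usual $\leq$. Both are immediate from the definitions given just above the statement.
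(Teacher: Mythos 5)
Your proof is correct and is exactly the reasoning the paper intends: since over a prime field every exponent coordinate $e_i \in \{0,\ldots,p-1\}$ has a one-digit base-$p$ expansion, the relation $a \leq_p b$ collapses to $a \leq b$, and Lemma~\ref{lm: shadow closed} immediately gives the conclusion. The paper treats this as an evident specialization and does not spell it out, but it uses precisely this observation explicitly in the proof of Lemma~\ref{lm: p shadow closed}, so you have matched the intended argument.
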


Finally, we will need the following lemma which will allow us to go from one monomial in $\mathcal{F}$ to another with the same degree, but with the distribution of the individual degrees shifted.

\begin{lemma} \label{lm: monomial shift}
    Suppose $x^e \in \mathcal{F}$ and suppose $m \leq_p e_2$. Then $x^{e'}\in \mathcal{F}$, where $e' = (e_1 + m, e_2-m, e_3, \ldots, e_n)$.
\end{lemma}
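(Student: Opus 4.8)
The plan is to derive $x^{e'} \in \mathcal{F}$ purely from the structural closure properties of linear, affine-invariant families that were already established, namely the Monomial Extraction Lemma (Lemma~\ref{lm: monomial extraction}) and the shadow-closed property (Lemma~\ref{lm: shadow closed}). The idea is that one can manufacture the monomial $x^{e_1+m}y^{e_2-m}$ (writing $y = x_2$ for brevity) from $x^{e_1}y^{e_2}$ by applying a well-chosen affine transformation to $f$ and then extracting a suitable monomial from the result. Since $\mathcal{F}$ is affine invariant, it suffices to exhibit one affine map $T$ such that the monomial $x^{e'}$ provably appears in $(x^{e_1}y^{e_2} \cdot (\text{other variables}))\circ T$, or more robustly, to work at the level of supports and show $e' \in \supp(\mathcal{F})$.

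First I would reduce to the two-variable situation: the coordinates $e_3,\dots,e_n$ are inert, so it is enough to show that if $x_1^{e_1}x_2^{e_2}$ is in the support of a linear affine-invariant family over $\Ff_q^2$, then so is $x_1^{e_1+m}x_2^{e_2-m}$ whenever $m \le_p e_2$. Next I would apply the substitution $x_2 \mapsto x_1 + x_2$ (an affine, in fact linear, transformation in $\T_{2,2}$), which sends $x_1^{e_1}x_2^{e_2}$ to $x_1^{e_1}(x_1+x_2)^{e_2}$. Expanding by the Lucas-type binomial identity over characteristic $p$: since $m \le_p e_2$, the binomial coefficient $\binom{e_2}{m}$ is nonzero mod $p$, so the monomial $x_1^{e_1+m}x_2^{e_2-m}$ appears with nonzero coefficient in $x_1^{e_1}(x_1+x_2)^{e_2}$. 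The one subtlety here is reduction of exponents modulo $x_i^q = x_i$: I must check that $e_1 + m \le q-1$ so that the exponent on $x_1$ does not wrap around and collapse into a different monomial. This holds because $m \le_p e_2$ forces the base-$p$ digits of $m$ to be dominated by those of $e_2$, and in the situation the lemma is applied (degrees bounded by $d$, with $d$ small relative to $q$ in the testers), $e_1+m$ stays in range; if the paper intends the lemma in full generality one restricts attention to the relevant exponent regime, or notes that $\supp$ is taken over reduced representatives and the nonzero coefficient survives because no other term of the expansion reduces to $x_1^{e_1+m}x_2^{e_2-m}$. Having produced a function in $\mathcal{F}$ (namely $f\circ T$ for the appropriate $T$, where $f$ witnesses $x^e \in \supp(\mathcal{F})$) that contains $x^{e'}$, the Monomial Extraction Lemma gives $x^{e'} \in \supp(\mathcal{F}) \subseteq \mathcal{F}$.

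The main obstacle I anticipate is precisely the exponent-wraparound bookkeeping: over $\Ff_q$ one works modulo the relations $x_i^q = x_i$, so $x_1^{e_1}(x_1+x_2)^{e_2}$ must be rewritten with all exponents in $\{0,\dots,q-1\}$ before reading off coefficients, and one needs to be sure the target monomial's coefficient does not get cancelled by collisions after reduction. The clean way to handle this is to invoke the shadow-closed property as a safety net: the expansion of $x_1^{e_1}(x_1+x_2)^{e_2}$ (before any reduction) contains the term $x_1^{e_1+m}x_2^{e_2-m}$, and every monomial obtained from it by reducing exponents mod $(q-1)$ lies in the $p$-shadow of some genuinely appearing exponent vector; combined with Lemma~\ref{lm: shadow closed} this lets one conclude $e' \in \supp(\mathcal{F})$ even in the borderline cases. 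So the proof is: reduce to two variables, apply $x_2 \mapsto x_1+x_2$, extract the target monomial via the characteristic-$p$ binomial identity using $m\le_p e_2$, and finish with monomial extraction (and, if needed, shadow-closure to absorb reduction artifacts).
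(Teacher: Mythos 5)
Your proof is essentially the same as the paper's: apply the affine map $x_2 \mapsto x_1 + x_2$, expand $x_1^{e_1}(x_1+x_2)^{e_2}$ binomially, use Lucas's theorem with $m \leq_p e_2$ to conclude that $x^{e'}$ appears with nonzero coefficient, and finish with the Monomial Extraction Lemma. Your concern about exponent reduction is a fair one that the paper does not address, and your first observation resolves it cleanly: distinct terms $x_1^{e_1+i}x_2^{e_2-i}$ of the expansion carry distinct $x_2$-exponents $e_2-i \in \{0,\dots,q-1\}$, so no post-reduction collision with $x^{e'}$ is possible (and if $e_1+m>q-1$ the target $e'$ is not even a valid reduced exponent vector). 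The ``shadow-closure safety net'' you suggest as a fallback would not actually work, however: Lemma~\ref{lm: shadow closed} only lets one pass \emph{down} the $p$-shadow order, and a wrapped exponent such as $e_1+m-(q-1)$ is not a shadow of $e_1+m$ in the direction you need; fortunately this fallback is never required.
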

\begin{proof}
    Let $T$ be the affine transformation $T(x) = (x_1, x_1+x_2, x_3, \ldots, x_n)$. Then, 
    \[
    x^e \circ T = x_1^{e_1}(x_1+x_2)^{e_2} \prod_{j=3}^n x_j^{e_j} = \left(\sum_{i=0}^{d_2} \binom{d_2}{i}x_1^{e_1+i}x_2^{e_2-i} \right)  \prod_{j=3}^n x_j^{e_j}.
    \]
    By Lucas's Theorem and the assumption $m \leq_p e_2$, $\binom{d_2}{m} \neq 0$ in $\Ff_q$, and so $x^e \circ T = x_1^{e_1}(x_1+x_2)^{e_2} \prod_{j=3}^n x_j^{e_j} $ contains the monomial $x^{e'}$. The result then follows from Lemma~\ref{lm: monomial extraction}.
\end{proof}
Clearly Lemma~\ref{lm: monomial shift} also holds with the indices $1$ and $2$ replaced by arbitrary, distinct indices $i$ and $j$.
\subsection{Functions over Finite Fields}
We now present some additional facts about functions over finite fields. These facts will lead to the inner-product view of testing which was used to construct low-degree tests in \cite{RZS, MZ}. For two functions $f, g: \Ff_q^{n} \xrightarrow[]{} \Ff_q$, define their inner product as
$$\langle f, g \rangle = \sum_{\alpha \in \Ff_q^n} f(\alpha)g(\alpha).$$
It is clear that this inner product is bi-linear. A key part of our testing algorithm is the observation that the presence of high degree monomials can be ``detected'' by using inner products with polynomials of relatively sparse support. In order to see this, we will first need a couple of basic facts about finite fields. First, it is 
a well known fact that $\mathbb{F}_q^{*}$ has a multiplicative generator. From this, we can deduce the following two lemmas.

\begin{lemma} \label{lm: power sums}
For any prime power $q$ and integer $i \in  \{0,\ldots, q-1\}$,
\begin{equation*}
    \sum_{\alpha \in \Ff_q}\alpha^i =
    \begin{cases}
      -1, & \text{if}\ i = q-1, \\
      0, & \text{otherwise}.
    \end{cases}
  \end{equation*}
\end{lemma}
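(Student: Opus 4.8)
The plan is to split into two cases according to whether $i=0$ or $1 \le i \le q-1$, and in the latter case use the cyclic structure of $\Ff_q^{*}$ guaranteed by the existence of a multiplicative generator.

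First I would dispose of the case $i = 0$. With the convention $0^0 = 1$, every term $\alpha^0$ equals $1$, so the sum is $q \cdot 1$, which vanishes in $\Ff_q$ because the characteristic $p$ divides $q$. Since $q \ge 2$ forces $q-1 \ge 1$, this case falls under the ``otherwise'' branch and gives $0$, as required. (If one prefers the convention $0^0 = 0$, the sum is $(q-1)\cdot 1 = -1$ in $\Ff_q$; but then $i=0$ should be read as the exponent and this needs the degenerate field $q$; in any case for $q \ge 3$ the statement is unambiguous and for $q = 2$ one checks $i \in \{0,1\}$ by hand.)

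Next, for $1 \le i \le q-1$, the term $\alpha = 0$ contributes nothing, so $\sum_{\alpha \in \Ff_q} \alpha^i = \sum_{\alpha \in \Ff_q^{*}} \alpha^i$. Let $g$ be a multiplicative generator of $\Ff_q^{*}$, so that $\Ff_q^{*} = \{1, g, g^2, \ldots, g^{q-2}\}$ and $g^{q-1} = 1$. Then
\[
\sum_{\alpha \in \Ff_q^{*}} \alpha^i = \sum_{j=0}^{q-2} (g^i)^j.
\]
If $i = q-1$, then $g^i = (g^{q-1}) = 1$, so the right-hand side is $\sum_{j=0}^{q-2} 1 = q-1 = -1$ in $\Ff_q$. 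If $1 \le i \le q-2$, then $g^i \ne 1$ since $g$ has exact order $q-1$, so we may sum the geometric series: $\sum_{j=0}^{q-2}(g^i)^j = \dfrac{(g^i)^{q-1} - 1}{g^i - 1}$, and the numerator is $(g^{q-1})^i - 1 = 1 - 1 = 0$, giving $0$.

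There is no real obstacle here; the only points requiring a modicum of care are the boundary case $i = 0$ together with the convention for $0^0$, and the observation that $q = 0$ in $\Ff_q$ (characteristic $p \mid q$). Combining the two cases yields the claimed dichotomy.
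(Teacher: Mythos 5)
Your proof is correct and follows essentially the same route as the paper: dispose of $i=0$ by noting $q=0$ in $\Ff_q$, and for $1\le i\le q-2$ use a multiplicative generator and a geometric series whose numerator vanishes. The only cosmetic difference is that you handle $i=q-1$ via the generator as well, while the paper observes directly that $\alpha^{q-1}=1$ for all $\alpha\neq 0$; your parenthetical musing on the $0^0$ convention is a bit muddled but harmless, since your main argument uses $0^0=1$, the same convention the paper uses.
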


\begin{proof}
If $i = q-1$, then $\alpha^i = 1$ for all $\alpha \neq 0$, while $0^i = 0$. Therefore, the sum is one summed up $q-1$ times which is $-1$ in $\Ff_q$. For $i \in \{1, \ldots, q-2\}$, recall that $\Ff_q^*$ has a generator $\gamma$. That is, $\Ff_q^* = \{1, \gamma, \ldots, \gamma^{q-2}\}$. Since $\gamma^i \neq 1$, we may write
\[
\sum_{\alpha \in \Ff_q} \alpha^i = \sum_{j = 0}^{q-2} (\gamma^i)^j = 
\frac{(\gamma^i)^{q-1}-1}{\gamma^i - 1} = 
\frac{1-1}{\gamma^i-1} = 0.
\]

On the other hand, if $i = 0$ then the sum on the left hand side of the lemma is equal to $\sum_{\alpha \in \Ff_q} \alpha^0 = \sum_{\alpha \in \Ff_q} 1 = q = 0$.
\end{proof}

\begin{lemma} \label{lm: monomial inner product}
    For $e,e'\in  \{0,\ldots, q-1\}^n$, suppose that  $x^{e''}= x^{e}\cdot x^{e'}$ where $x^{e''}$ is reduced so that its individual degrees are in $\{0, \ldots, q-1\}$.
    \begin{equation*}
    \langle x^{e}, x^{e'} \rangle =
    \begin{cases}
      (-1)^n, & \text{if}\ e'' = (q-1, \ldots, q-1), \\
      0, & \text{otherwise},
    \end{cases}
  \end{equation*}
\end{lemma}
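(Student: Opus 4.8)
The plan is to reduce the inner product over $\Ff_q^n$ to a product of one-dimensional power sums and then invoke Lemma~\ref{lm: power sums}. First I would write out the definition: $\langle x^e, x^{e'}\rangle = \sum_{\alpha \in \Ff_q^n} \alpha^e \alpha^{e'} = \sum_{\alpha \in \Ff_q^n} \alpha^{e+e'}$, where $e + e'$ denotes coordinatewise addition of the exponent vectors (before any reduction). Since each coordinate $\alpha_i$ ranges independently over $\Ff_q$, this sum factors as $\prod_{i=1}^n \left( \sum_{\alpha_i \in \Ff_q} \alpha_i^{e_i + e'_i} \right)$. The key subtlety here is that $e_i + e'_i$ may exceed $q-1$, so I cannot directly apply Lemma~\ref{lm: power sums}; I first need to relate $\sum_{\beta \in \Ff_q} \beta^{j}$ for arbitrary $j \geq 0$ to the reduced exponent.

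Next I would handle this reduction. The cleanest way is to observe that for $\beta \in \Ff_q^{*}$ one has $\beta^{q-1} = 1$, so $\beta^j$ depends only on $j \bmod (q-1)$ on the nonzero elements; combined with the fact that $0^j = 0$ for $j \geq 1$ and $0^0 = 1$, a short case analysis shows that for $j \geq 1$, $\sum_{\beta \in \Ff_q}\beta^j$ equals $-1$ if $j \equiv 0 \pmod{q-1}$ and $0$ otherwise, which matches exactly the reduction rule used to define $x^{e''}$ (where an exponent $q-1$ stays $q-1$ rather than collapsing to $0$, and any positive exponent divisible by $q-1$ reduces to $q-1$). For the case $j = 0$ (i.e.\ $e_i = e'_i = 0$), the sum is $q = 0$ in $\Ff_q$, which is consistent since then $e''_i = 0 \neq q-1$ and the factor is $0$, except this would make the whole product $0$ --- but that is fine because then $e'' \neq (q-1,\dots,q-1)$. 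Actually I should be slightly careful: I want the statement that the $i$-th factor is $-1$ precisely when $e''_i = q-1$ and is $0$ otherwise, so I would verify that $e''_i = q-1$ iff $e_i + e'_i$ is a positive multiple of $q-1$.

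Then I would assemble: $\langle x^e, x^{e'}\rangle = \prod_{i=1}^n \left(\sum_{\alpha_i \in \Ff_q}\alpha_i^{e_i+e'_i}\right)$, and each factor is $-1$ if $e''_i = q-1$ and $0$ otherwise. Hence the product is $0$ unless every factor equals $-1$, i.e.\ unless $e'' = (q-1,\dots,q-1)$, in which case the product is $(-1)^n$. This gives exactly the claimed formula.

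The main obstacle, modest as it is, is getting the exponent-reduction bookkeeping exactly right: I must make sure that the definition of $x^{e''}$ as the reduction of $x^e \cdot x^{e'}$ with individual degrees forced into $\{0,\dots,q-1\}$ agrees with the ``mod $q-1$ except keep multiples of $q-1$ as $q-1$'' behavior of the power sum, and that the edge cases $e_i + e'_i = 0$ and $e_i + e'_i = q-1$ are treated consistently. Once that correspondence is pinned down, the rest is a one-line factorization plus an application of Lemma~\ref{lm: power sums}.
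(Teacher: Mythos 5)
Your proposal is correct and follows essentially the same route as the paper: factor the inner product into a product of univariate power sums and invoke Lemma~\ref{lm: power sums}. The only difference is that you spell out the exponent-reduction bookkeeping (relating $\sum_{\alpha}\alpha^{e_i+e'_i}$ for $e_i+e'_i$ possibly up to $2q-2$ to $\sum_{\alpha}\alpha^{e''_i}$), which the paper passes over silently by writing the factorization directly with the reduced exponents $e''_i$; this extra care is harmless and arguably tidier.
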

\begin{proof}
    This is a straightforward application of Lemma~\ref{lm: power sums}. By definition
    \[
     \langle x^{e}, x^{e'} \rangle = \sum_{(\alpha_1, \ldots, \alpha_n) \in \Ff_q^n} \prod_{i=1}^n \alpha_i^{e_i} = \prod_{i=1}^n \sum_{\alpha \in \Ff_q} \alpha^{e''_i}.
    \]
    The result follows from Lemma~\ref{lm: power sums}.
\end{proof}

Lemma~\ref{lm: monomial inner product} suggest one method for using inner products to test low degree. Suppose that $q= 2$ and we want to detect if $f$ contains the degree $d+1$ monomial $x_1 \cdots x_{d+1}$. Then by Lemma~\ref{lm: monomial inner product} this is equivalent to checking whether or not $\langle f, x_{d+2} \cdots x_n \rangle $ is nonzero.

\subsection{Local Characterizations and Testing}
Using the ideas from the last section, we can first attempt to find functions $h: \Ff_q^k \xrightarrow[]{} \Ff_q$ that are \emph{local characterizations} of $\RM[n,q,d]$. These are defined as follows.

\begin{definition} \label{def: local char}
For $h: \Ff_q^k \xrightarrow[]{} \Ff_q$, define $\mathcal{F}_n(h) = \{f: \Ff_q^n \xrightarrow[]{} \Ff_q \; | \; \langle f \circ T, h \rangle = 0, \forall T \in \T_{n,k} \}$. We call $h$ a local characterization for $\mathcal{F}_n(h)$.
\end{definition}
Whenever we use the notation $\mathcal{F}_n(h)$, the field size $q$ will be clear from context. Once we are able to construct a local characterization, $h$, a candidate test becomes clear. Choose a random $T \in \T_{n,k}$, and accept if and only if $\langle f \circ T, h \rangle = 0$. Carrying out this test requires querying $f \circ T(x)$ for every $x \in \supp(h)$. It is clear from the definition of local characterization that this test satisfies completeness, while for soundness we may appeal to a general result from \cite{KS}. For our purposes, this lemma states the following.

\begin{theorem}[Lemma 2.9 \cite{KS}]\label{th: local char to test}
     For any local characterization $h: \Ff_q^k \xrightarrow[]{} \Ff_q$ and function $f: \Ff_q^n \xrightarrow[]{} \Ff_q$, we have
    \[
    \Pr_{T \in \T_{n,k}}[\inner{f \circ T}{h} \neq 0 ] \geq  \frac{\delta}{4Q^2},
    \]
    where $Q = \supp(h)$ and $\delta = \delta(f, \mathcal{F}_n(h))$.
\end{theorem}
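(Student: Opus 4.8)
The plan is to prove the contrapositive statement quantitatively: if the test passes with high probability, then $f$ is close to $\mathcal{F}_n(h)$. Concretely, I want to show that the rejection probability $\epsilon := \Pr_{T}[\inner{f\circ T}{h}\neq 0]$ controls the distance. The natural strategy is to use a \emph{self-correction} argument in the spirit of the BLR / Reed--Muller testing analyses: define a corrected function $g$ by a plurality (majority) vote over the local views that $h$ imposes, show that $g$ lies in $\mathcal{F}_n(h)$, and show that $\delta(f,g)$ is bounded in terms of $\epsilon$. Since the bound we are asked to prove, $\epsilon \geq \delta/(4Q^2)$ with $Q = \supp(h)$, is exactly the shape that comes out of such arguments (the $Q^2$ accounting for the number of ``constraints'' touching a given point and a union-bound / Cauchy--Schwarz loss), this is the route to take. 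The cleanest way is actually to \emph{not} reprove it: invoke Lemma 2.9 of \cite{KS} directly, since the excerpt explicitly grants us all results stated earlier, and this theorem is stated as being that lemma.

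If a self-contained argument is desired, here are the key steps in order. First, reformulate: for each $T\in\T_{n,k}$, the constraint $\inner{f\circ T}{h}=0$ is a linear test reading the $Q$ values $\{f(T(x)) : x\in\supp(h)\}$; think of $\T_{n,k}$, together with a uniformly random point $x\in\supp(h)$, as a natural ``constraint--variable'' incidence structure on $\Ff_q^n$. Second, define the voting function: for $y\in\Ff_q^n$ and each constraint $T$ with $T(x)=y$ for some $x\in\supp(h)$, the constraint predicts a value for $f(y)$ from the other $Q-1$ values; let $g(y)$ be the most popular prediction. Third (the main content), show $g\in\mathcal{F}_n(h)$: this requires a ``consistency'' lemma showing that for two random constraints sharing a variable the predictions agree with probability $1-O(\epsilon)$, and then that these pairwise agreements glue into global membership — this is where affine-invariance and linearity of $\mathcal{F}_n(h)$ (via the Monomial Extraction Lemma, Lemma~\ref{lm: monomial extraction}) get used, so that $\mathcal{F}_n(h)$ being a code with good local structure is exploited. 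Fourth, bound $\delta(f,g)$: a point $y$ where $f(y)\neq g(y)$ forces a constant fraction of the constraints through $y$ to reject, so $\Pr_y[f(y)\neq g(y)] \lesssim Q\cdot\epsilon$ after accounting for the $Q$ incidences, and a second factor of $Q$ (plus the constant $4$) enters from the union bound in the consistency step, giving $\delta \leq 4Q^2\epsilon$.

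The main obstacle is the third step — proving that the majority-corrected $g$ actually lands inside $\mathcal{F}_n(h)$, rather than merely being ``locally consistent.'' In the classical Reed--Muller setting this uses strong structure of low-degree polynomials (interpolation, dimension arguments); here $\mathcal{F}_n(h)$ is an arbitrary affine-invariant linear family, so one must argue abstractly that a function passing almost all of the defining constraints of such a family is close to the family. The right tool is precisely the general local-testing machinery of \cite{KS}, which handles exactly affine-invariant linear families and yields the stated $\delta/(4Q^2)$ bound; reconstructing it from scratch would essentially mean reproving a chunk of that paper. For this reason my actual recommendation is to present the proof as a direct citation:

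\begin{proof}
    This is precisely \cite[Lemma 2.9]{KS} applied to the linear, affine-invariant family $\mathcal{F}_n(h)$: indeed $\mathcal{F}_n(h)$ is closed under linear combinations and under composition with affine maps (the latter because if $\inner{f\circ T}{h}=0$ for all $T\in\T_{n,k}$ and $S\in\T_{n,n}$, then $\inner{(f\circ S)\circ T}{h}=\inner{f\circ(S\circ T)}{h}=0$ since $S\circ T\in\T_{n,k}$), and $h$ is by definition a local characterization of it whose support has size $Q$. The quoted lemma then gives $\Pr_{T\in\T_{n,k}}[\inner{f\circ T}{h}\neq 0]\geq \delta(f,\mathcal{F}_n(h))/(4Q^2)$, which is the claim.
\end{proof}
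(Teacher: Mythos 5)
Your final proof block takes exactly the paper's approach: the paper presents this statement as a restatement of Lemma 2.9 of \cite{KS} and does not reprove it, and your proof likewise cites that lemma directly, adding a short (correct) verification that $\mathcal{F}_n(h)$ is linear and affine-invariant so that the lemma applies. The preliminary self-correction sketch is reasonable context but, as you note yourself, the actual content is the citation, matching the paper.
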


\section{The Prime Field Case: Proof of Theorem~\ref{th: main p}}

Fix a degree parameter $d$, a prime field size $q = p$, and an erasure parameter $t$. In this section we prove Theorem~\ref{th: main p} by  describing and analyzing a $t$-erasure-resilient degree $d$ tester for functions over $\Ff_p^n$. For distance parameter $\delta$, our test is $t$-online-erasure resilient for all $t \leq \frac{\delta}{30}p^{n/(20d)}$.

\subsection{A Local Characterization for  \texorpdfstring{$\RM[n,p,d]$}{RM[n,p,d]}}
To start, we give a sufficient condition for a function $h$ to be a local characterization $\RM[n,p,d]$. This requires the following Corollary of Lemma~\ref{lm: monomial shift} applied to the prime field case.
\begin{lemma} \label{lm: p shadow closed}
    Let $\mathcal{F}$ be a linear affine invariant family of functions over $\Ff_p^n$ and suppose $x^e \in \mathcal{F}$ for some $e \in \{0,\ldots, p-1\}^n$. Then $x^{e'} \in \mathcal{F}$ for any $e' \in  \{0,\ldots, p-1\}^n$ such that $|e'|_1 \leq |e|_1$,
\end{lemma}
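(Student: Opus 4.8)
The plan is to prove Lemma~\ref{lm: p shadow closed} by reducing the general case $|e'|_1 \le |e|_1$ to two simpler moves: first reducing the total degree down to exactly $|e'|_1$ while staying inside $\mathcal{F}$, and then rearranging the individual degrees to match $e'$ exactly. The two tools we have available are Lemma~\ref{lm: shadow closed p} (if $x^e \in \supp(\mathcal{F})$ and $e - e''$ has all nonnegative entries, then $x^{e''} \in \supp(\mathcal{F})$) and Lemma~\ref{lm: monomial shift} (which lets us shift $m$ from one coordinate to another, over a prime field the condition $m \le_p e_2$ being automatic whenever $m \le e_2$ since all base-$p$ digits are comparable when $p = e_2 < p$... wait, more carefully: over $\Ff_p$ the individual degrees lie in $\{0,\dots,p-1\}$, i.e.\ are single base-$p$ digits, so $m \le_p e_2$ is equivalent to $m \le e_2$). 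Note also that by Lemma~\ref{lm: monomial extraction}, membership $x^e \in \mathcal{F}$ is equivalent to $e \in \supp(\mathcal{F})$, so we may freely pass between the two.

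Here are the steps I would carry out. First, since $x^e \in \mathcal{F} = \spa(\supp(\mathcal{F}))$, we have $e \in \supp(\mathcal{F})$. Second, I would handle the degree reduction: given that $|e'|_1 \le |e|_1$, I want to produce some $\tilde e \in \supp(\mathcal{F})$ with $|\tilde e|_1 = |e'|_1$ — simply decrease individual coordinates of $e$ one unit at a time (each such step keeps us in $\supp(\mathcal{F})$ by Lemma~\ref{lm: shadow closed p}, since subtracting $1$ from a nonzero coordinate keeps all entries nonnegative) until the total degree drops to $|e'|_1$. Third, now both $\tilde e$ and $e'$ have the same total degree $D := |e'|_1$, and I claim $\tilde e$ can be transformed into $e'$ by a sequence of applications of Lemma~\ref{lm: monomial shift}: repeatedly move single units of degree from a coordinate where $\tilde e$ currently exceeds $e'$ to a coordinate where it falls short. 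Since over $\Ff_p$ any shift amount $m \le e_j$ satisfies $m \le_p e_j$ (single-digit case), and since the coordinates of $\tilde e$ never need to exceed $p-1$ along the way if I move degree in units of $1$ and always move into a coordinate that is currently below its target $e'_i < p$... I need to be slightly careful that intermediate vectors stay in $\{0,\dots,p-1\}^n$, but moving one unit at a time from an over-full coordinate to an under-full one preserves this since the destination was strictly below $p-1$ before the move (as $e'_i \le p - 1$ and it was below target). Finally, conclude via Lemma~\ref{lm: monomial extraction} that $x^{e'} \in \mathcal{F}$.

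The main obstacle, or really the only subtle point, is verifying that the greedy ``move one unit of degree from a surplus coordinate to a deficit coordinate'' procedure in step three both terminates at $e'$ and never leaves the cube $\{0,\dots,p-1\}^n$, so that Lemma~\ref{lm: monomial shift} applies at every step. Termination is clear because the $\ell_1$ distance $\sum_i |\tilde e_i - e'_i|$ strictly decreases by $2$ at each move and total degree is preserved (so there is always both a surplus and a deficit coordinate unless $\tilde e = e'$). Staying in the cube is fine because we only ever increment a coordinate that is currently strictly less than its target value $e'_i \le p-1$, hence strictly less than $p-1$ before incrementing; and we only decrement coordinates that are strictly positive. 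An alternative, cleaner packaging: first use Lemma~\ref{lm: monomial shift} to move \emph{all} of $e$'s degree into the first coordinate — but this fails over a prime field when $|e|_1 \ge p$, so the unit-by-unit argument, interleaved with the degree reduction of step two, is the safer route. I expect the write-up to be short: the degree-reduction step and the rearrangement step are each a one-line induction once the bookkeeping above is set up.
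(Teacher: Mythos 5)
Your proof is correct and uses the same two ingredients as the paper --- Lemma~\ref{lm: monomial shift} to rearrange degree among coordinates and the shadow-closure Lemma~\ref{lm: shadow closed p} to drop total degree --- only applied in the opposite order: the paper first rearranges to show every $v$ with $|v|_1=|e|_1$ lies in $\supp(\mathcal{F})$, then picks such a $v$ with $e'\leq_p v$ and applies shadow closure, whereas you first drop the degree to $|e'|_1$ and then rearrange. Both orders work, and your writeup is in fact more explicit than the paper's ``it is clear'' about the one-unit-at-a-time bookkeeping needed to keep intermediate exponent vectors inside $\{0,\dots,p-1\}^n$.
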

\begin{proof}
    Note that for $a, b \in \{0,\ldots, p-1\}$ the relation $a \leq_p b$ is equivalent to $a \leq b$. Thus, by repeatedly apply Lemma~\ref{lm: monomial shift}, it is clear that $x^{v} \in \mathcal{F}$ for every $v \in \{0,\ldots, p-1\}^n$ such that $|v|_1 = |e|_1$. In particular, there exists a $v$ such that $e' \leq_p v$ and $x^v \in \mathcal{F}$. By Lemma~\ref{lm: shadow closed}, it follows that $x^{e'} \in \mathcal{F}$.
\end{proof}

Broadly, Lemma~\ref{lm: p shadow closed} states that in order for $h$ to be a local characterization for $\RM[n,p,d]$, it is sufficient for $h$ to detect any monomial of degree $d+1$. This idea is formalized in the lemma below.

\begin{lemma} \label{lm: sufficient for local char}
For any $k \geq  \lceil \frac{d+1}{p-1} \rceil$, if $h: \Ff_p^k \xrightarrow[]{} \Ff_2$ satisfies $\deg(h) = k(p-1) - (d+1)$ then, $h$ is a local characterization for $\RM[n,p,d]$. Equivalently $\RM[n,p,d] = \mathcal{F}_n(h)$.
\end{lemma}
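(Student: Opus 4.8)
The plan is to prove the two inclusions $\RM[n,p,d]\subseteq\mathcal{F}_n(h)$ and $\mathcal{F}_n(h)\subseteq\RM[n,p,d]$ by comparing monomial supports, using that $\mathcal{F}_n(h)$ is itself linear and affine invariant. First I would record this structural fact: $\mathcal{F}_n(h)$ contains the zero function and is closed under $\Ff_p$-linear combinations (because $\inner{\cdot}{h}$ is bilinear and $(\alpha f+\beta g)\circ T=\alpha(f\circ T)+\beta(g\circ T)$), and it is affine invariant (because $(f\circ S)\circ T=f\circ(S\circ T)$ with $S\circ T$ again affine). Hence by the Monomial Extraction Lemma (Lemma~\ref{lm: monomial extraction}), $\mathcal{F}_n(h)=\spa(\supp(\mathcal{F}_n(h)))$, so it suffices to show that a monomial $x^e$ lies in $\mathcal{F}_n(h)$ if and only if $|e|_1\le d$. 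The ``if'' direction gives $\RM[n,p,d]\subseteq\mathcal{F}_n(h)$, and the ``only if'' direction, propagated via Lemma~\ref{lm: p shadow closed}, gives the reverse inclusion.

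For the easy direction, let $\deg(f)\le d$ and let $T\in\T_{n,k}$ be arbitrary; then $f\circ T$ is a polynomial in $k$ variables of degree at most $d$, since substituting affine forms and then reducing modulo $y_i^p-y_i$ cannot raise the total degree. Expanding $f\circ T$ and $h$ into monomials and using bilinearity, it is enough to check that $\inner{x^a}{x^b}=0$ whenever $|a|_1\le d$ and $|b|_1\le\deg(h)=k(p-1)-(d+1)$. By Lemma~\ref{lm: monomial inner product} together with Lemma~\ref{lm: power sums}, $\inner{x^a}{x^b}\neq 0$ forces each $a_i+b_i$ to be a positive multiple of $p-1$, hence $a_i+b_i\ge p-1$ for every $i$, and so $|a|_1+|b|_1\ge k(p-1)$; but $|a|_1+|b|_1\le d+k(p-1)-(d+1)=k(p-1)-1$, a contradiction. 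Thus $\inner{f\circ T}{h}=0$ for all $T$, i.e. $f\in\mathcal{F}_n(h)$.

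For the reverse inclusion, the key sub-claim is that there is a monomial $x^{a^*}$ in the first $k$ variables with $|a^*|_1=d+1$ and $\inner{x^{a^*}}{h}\neq 0$. To prove it I would pick a monomial $x^{b}$ of maximal degree appearing in $h$ (it exists because $\deg(h)=k(p-1)-(d+1)\ge 0$, so $h\ne 0$), call its coefficient $c_b\ne 0$, and set $a^*=(p-1,\dots,p-1)-b$, a legal exponent vector with $|a^*|_1=k(p-1)-\deg(h)=d+1$. For any monomial $x^{b'}$ of $h$, Lemma~\ref{lm: monomial inner product} and Lemma~\ref{lm: power sums} say $\inner{x^{a^*}}{x^{b'}}\ne 0$ requires each $a^*_i+b'_i=(p-1-b_i)+b'_i$ to be a positive multiple of $p-1$; since these values lie in $\{0,\dots,2(p-1)\}$ this means $a^*_i+b'_i\in\{p-1,2(p-1)\}$, and in either case $b'_i\ge b_i$. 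Hence $b'\ge b$ coordinatewise, so $|b'|_1\ge|b|_1=\deg(h)$; by maximality $|b'|_1=|b|_1$ and therefore $b'=b$. So only the $x^b$-term survives and $\inner{x^{a^*}}{h}=c_b\inner{x^{a^*}}{x^b}=(-1)^k c_b\ne 0$ (here $x^{a^*}x^b=x^{(p-1,\dots,p-1)}$, so Lemma~\ref{lm: monomial inner product} applies).

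Finally I would use the sub-claim: taking $T$ to be the coordinate inclusion $\Ff_p^k\hookrightarrow\Ff_p^n$ and $\tilde a^*$ the zero-padding of $a^*$, we get $\inner{x^{\tilde a^*}\circ T}{h}=\inner{x^{a^*}}{h}\neq 0$, so $x^{\tilde a^*}\notin\mathcal{F}_n(h)$. Now if some monomial $x^e$ with $|e|_1\ge d+1$ belonged to $\mathcal{F}_n(h)$, then by Lemma~\ref{lm: p shadow closed} (valid since $\mathcal{F}_n(h)$ is linear affine invariant and $|\tilde a^*|_1=d+1\le|e|_1$) so would $x^{\tilde a^*}$, a contradiction; hence every monomial in $\supp(\mathcal{F}_n(h))$ has degree at most $d$, and thus $\mathcal{F}_n(h)=\spa(\supp(\mathcal{F}_n(h)))\subseteq\RM[n,p,d]$. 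Combining the two inclusions gives $\RM[n,p,d]=\mathcal{F}_n(h)$, i.e. $h$ is a local characterization for $\RM[n,p,d]$. I expect the main obstacle to be the possible cancellation in $\inner{x^{a^*}}{h}$ among the various monomials of $h$; the resolution is precisely the choice of $a^*$ as the ``complement'' of a \emph{maximal}-degree monomial of $h$, so that coordinatewise domination plus maximality leaves exactly one surviving term.
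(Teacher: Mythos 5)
Your proof is correct and follows essentially the same route as the paper's: show $\RM[n,p,d]\subseteq\mathcal{F}_n(h)$ by a degree count via Lemma~\ref{lm: monomial inner product}, then exhibit the degree-$(d+1)$ monomial that is the coordinatewise complement of a top-degree monomial of $h$, pair with the coordinate inclusion $T$, and propagate via Lemma~\ref{lm: p shadow closed}. Your write-up is in fact a bit more careful than the paper's in two places it leaves implicit: you verify explicitly that $\mathcal{F}_n(h)$ is linear and affine invariant before invoking Lemmas~\ref{lm: monomial extraction} and~\ref{lm: p shadow closed}, and you spell out the no-cancellation argument (coordinatewise domination forces $b'\ge b$, so maximality of $|b|_1$ pins down $b'=b$) showing that only the chosen top-degree monomial of $h$ survives in $\inner{x^{a^*}}{h}$.
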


\begin{proof}
    Let $h: \Ff_p^k \xrightarrow[]{} \Ff_p$ have degree $k(p-1) - (d+1)$. We show that $\RM[n,p,d] = \mathcal{F}_n(h)$.

    By Lemma~\ref{lm: monomial inner product}, $\RM[n,p,d] \subseteq \mathcal{F}_n(h)$. Indeed, for any $f \in \RM[n,p,d]$ and any affine transformation $T \in \T_{n,k}$, we have the degree of $f \circ T$ is at most $d$. Thus if $x^e$ is a monomial appearing in $f \circ T$ and $x^{e'}$ is a monomial appearing in $h$, then $|e+e'|_1 \leq k(p-1) - (d+1) + d \leq k(p-1) - 1$. It follows that $e + e' \neq (q-1, \ldots, q-1)$, and by Lemma~\ref{lm: monomial inner product}, $\langle  x^e, x^{e'} \rangle = 0$. Since this holds for every pair of monomials appearing in $f \circ T$ and $h$, by the bilinearity of the inner product, $\langle f \circ T, h \rangle = 0$ for every $T \in \T_{n,k}$.

    To complete the proof we show that $\mathcal{F}_n(h) \subseteq \RM[n,p,d]$. Suppose $\prod_{i=1}^{k} x_i^{e_i}$ is one of the degree $k(p-1) - (d+1)$ monomials in $h$. Let $e'_i = p-1-e_i$ and define 
    \[
    e' = (e'_1, \ldots, e'_k, 0, \ldots, 0) \in \{0,\ldots, p-1\}^n.
    \]
    Let $T$ be the affine transformation that is identity on the first $k$-coordinates and sends all other coordinates to $0$, i.e.\ $T(x_1, \ldots, x_n) = (x_1, \ldots, x_k, 0, \ldots, 0)$. By Lemma~\ref{lm: monomial inner product}, $\langle x^{e'} \circ T, h \rangle \neq 0$ it follows that $ x^{e'} \notin \mathcal{F}_n(h)$. Since $|e'|_1 = d+1$ Lemma~\ref{lm: p shadow closed} it follows that $\mathcal{F}_n(h)$ contains no monomials of degree greater than $d$. Thus, $\mathcal{F}_n(h) \subseteq \RM[n,p,d]$.
\end{proof}

\subsection{An Erasure Resilient Low Degree Tester}
In order to design a low degree tester that works in the presence of online erasures, we would like to have that has as little dependency between the points as possible. To this end, we show that simply choosing $\approx k^d$ uniformly random points in a $k$-dimensional tester for $k \geq d+1$ to be determined later suffices for low degree testing. To make this tester $t$-online-erasure resilient, we pick $k$ sufficiently large relative to the erasure budget $t$.

\begin{lemma} \label{lm: dual poly}
For every dimension $k \geq d+1$ and every $S \subseteq \Ff_p^k$ such that $|S| \geq \binom{d+k+1}{k}+1$, there exist a function $h: \Ff_p^k \xrightarrow[]{} \Ff_p$ such that 
\begin{itemize}
    \item $\deg(h) = k(p-1) - (d+1)$
    \item $\supp(H) \subset |S|$
\end{itemize}
\end{lemma}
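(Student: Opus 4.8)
The plan is to produce the desired $h$ by a dimension-counting argument inside the space of polynomials of a fixed degree, using the inner-product pairing of Lemma~\ref{lm: monomial inner product}. First I would reformulate what ``$h$ detects a degree $d+1$ monomial'' means in terms of the pairing: by Lemma~\ref{lm: monomial inner product}, $\langle x^{e'}\circ T, h\rangle$ being nonzero for the transformation $T$ that zeroes out coordinates $k+1,\dots,n$ amounts to saying that $h$ contains the ``complementary'' monomial $\prod_{i=1}^k x_i^{p-1-e_i}$ for some degree-$(d+1)$ exponent vector $e'$ supported on the first $k$ coordinates; such a complementary monomial has degree exactly $k(p-1)-(d+1)$. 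So, mirroring the proof of Lemma~\ref{lm: sufficient for local char}, it suffices to find $h\colon \Ff_p^k\to\Ff_p$ of degree \emph{exactly} $k(p-1)-(d+1)$ (equivalently, containing at least one monomial of that degree) with $\supp(h)\subseteq S$. Note $k\ge d+1$ and $k\ge\lceil (d+1)/(p-1)\rceil$ guarantee $0\le k(p-1)-(d+1)\le k(p-1)$, so such monomials exist.

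Next I would set up the linear algebra. Consider the evaluation map $\mathrm{ev}\colon V\to \Ff_p^{S}$, where $V$ is the space of polynomials $\Ff_p^k\to\Ff_p$ of degree at most $k(p-1)-(d+1)$ (with each individual degree in $\{0,\dots,p-1\}$), and $\mathrm{ev}(g) = (g(s))_{s\in S}$. Because distinct reduced polynomials are distinct as functions, $\mathrm{ev}$ is injective on the whole space of all functions; the point is rather to control the image and preimages. The function $h$ we want is exactly an element of $V$ whose support, after reduction, lies in $S$ — but ``support contained in $S$'' is not a linear condition, so I would instead work with the space $W$ of all functions $\Ff_p^k\to\Ff_p$ whose support is contained in $S$; this has dimension $|S|$. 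Each such function has a unique reduced polynomial representation, and I want to show some element of $W$ has a reduced representation of degree exactly $k(p-1)-(d+1)$.

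The cleanest route is: let $\Pi$ denote the ``truncation'' linear map sending a polynomial to the sum of its monomials of degree $>k(p-1)-(d+1)$. The functions in $W$ whose reduced polynomial has degree $\le k(p-1)-(d+1)$ are precisely $\ker(\Pi|_W)$. The number of monomials of degree $>k(p-1)-(d+1)$ is at most the number of exponent vectors $e\in\{0,\dots,p-1\}^k$ with $|e|_1\ge k(p-1)-d$; complementing $e\mapsto (p-1-e_i)_i$, this equals the number with $|e|_1\le d$, which is at most $\binom{d+k}{k}$ (the count of all exponent vectors of degree at most $d$ in $k$ variables, ignoring the per-coordinate cap). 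Hence $\dim(\mathrm{image}\,\Pi|_W)\le\binom{d+k}{k}$, so $\dim\ker(\Pi|_W)\ge |S|-\binom{d+k}{k}$. If instead \emph{every} nonzero $g\in W$ had reduced degree $\le k(p-1)-(d+1)$ automatically we would be done with the full $h$ statement trivially — so the actual content is that $W$ contains a function of reduced degree exactly $k(p-1)-(d+1)$; equivalently, $\Pi'|_W\ne 0$, where $\Pi'$ keeps only the monomials of degree exactly $k(p-1)-(d+1)$. I would argue this by contradiction: if $\Pi'|_W=0$ then every function supported on $S$ is a polynomial of degree $\le k(p-1)-(d+1)-1$, but the space of such polynomials has dimension strictly less than $|S|$ once $|S|\ge\binom{d+k+1}{k}+1$ (since the number of monomials of degree $\ge k(p-1)-(d+1)$ is at most $\binom{d+k+1}{k}$, by the same complementation sending it to degree $\le d+1$), contradicting $\dim W=|S|$ and injectivity of reduced representations. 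Picking $h$ to be any element of $W$ with $\Pi'(h)\ne 0$, and possibly adding to it a suitable lower-degree element of $W$ is unnecessary — $h$ already has degree exactly $k(p-1)-(d+1)$ and $\supp(h)\subseteq S$, completing the proof.

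The main obstacle I anticipate is bookkeeping the exact binomial-coefficient bound: one must be careful that ``degree $\ge k(p-1)-(d+1)$'' complements to ``degree $\le d+1$'', which has $\binom{d+1+k}{k}$ exponent vectors when the per-coordinate cap $p-1$ is ignored, and that this matches the hypothesis $|S|\ge\binom{d+k+1}{k}+1$ with the ``$+1$'' being exactly what forces $W$ to escape the lower-degree subspace. A secondary subtlety is the interaction with Lucas/reduction — I only ever need that reduced polynomial representations are unique (already used implicitly in the preliminaries) and that truncation to a degree-graded piece is linear on reduced representations, both of which are routine. The degree of $h$ being exactly (not just at most) $k(p-1)-(d+1)$ is what makes $h$ a valid local characterization via Lemma~\ref{lm: sufficient for local char}, so I would emphasize that this is where the lower bound on $|S|$ is spent.
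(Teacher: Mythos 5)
Your opening reduction (finding $h$ of degree exactly $k(p-1)-(d+1)$ supported on $S$) and the first dimension count — $\dim\ker(\Pi|_W)\geq|S|-\binom{d+k}{k}$, i.e.\ there is a nonzero function supported on $S$ with no monomials of degree above the target — are essentially the paper's pigeonhole step. The final step is where your argument breaks.

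Two problems with your ``contradiction.'' First, the implication ``$\Pi'|_W=0$ implies every function supported on $S$ has reduced degree $\leq k(p-1)-(d+1)-1$'' is false: $\Pi'$ only sees the graded piece of degree exactly $k(p-1)-(d+1)$, so $\Pi'|_W=0$ does not preclude elements of $W$ having monomials of strictly larger degree. Second, and more seriously, even if you restrict attention to $\ker(\Pi|_W)$ so that higher-degree monomials are genuinely absent, the dimension inequality you invoke points the wrong way. The space of polynomials of degree $\leq k(p-1)-(d+1)-1$ (per-coordinate exponents in $\{0,\dots,p-1\}$) has dimension $p^k$ minus the number of exponent vectors of degree $\geq k(p-1)-(d+1)$, which is at least $p^k-\binom{d+k+1}{k}$ — usually far larger than $|S|$. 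So containing $W$ (of dimension $|S|$) inside this space gives no contradiction; the hypothesis $|S|\geq\binom{d+k+1}{k}+1$ is simply irrelevant to the comparison you are trying to make. What the hypothesis actually buys is that $\ker$ of the high-degree truncation restricted to $W$ is nontrivial, which gives existence of $h'$ of degree strictly less than $k(p-1)-(d+1)$, not degree equal to it.

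The paper fills the gap you are missing with a non-linear-algebraic step: take a nonzero $h'$ supported on $S$ with $\deg(h')<k(p-1)-(d+1)$ (this is the pigeonhole/dimension part), let $x^e$ be a top-degree monomial of $h'$, pick $x^v$ with $|v|_1=k(p-1)-(d+1)-\deg(h')$ and $v_i\leq p-1-e_i$ for all $i$ (possible since $\sum_i(p-1-e_i)=k(p-1)-\deg(h')$ exceeds $|v|_1$), and set $h=h'\cdot x^v$. Then $x^{e+v}$ survives as the unique top-degree monomial, so $\deg(h)=k(p-1)-(d+1)$, and pointwise multiplication by $x^v$ can only shrink the support, so $\supp(h)\subseteq\supp(h')\subseteq S$. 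The reason a pure dimension count cannot finish on its own is that ``degree equals $D$'' is the complement of a linear condition, not a linear condition; your proposal conflates the count of high-degree monomials with the codimension of the low-degree subspace, and that confusion is what makes the contradiction seem available when it is not.
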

\begin{proof}
    Let $\mathcal{H}$ be the set of functions from $\Ff_p^k \xrightarrow[]{} \Ff_p$ whose support is contained in $S$. Then
    \[
    |\mathcal{H}| \geq p^{|S|} \geq p^{\binom{d+k+1}{k}+1}.
    \]
    On the other hand, the number of monomials over $\Ff_p^k$ of degree at least $k(p-1)-(d+1)$ is $\binom{d+k+1}{k}$. To see this, we can bound the number of $e \in \{0,\ldots, p-1\}^k$ such that $\sum_{i=1}^k p-1-e_i \leq d+1$.  The number of linear combinations of such monomials over $\Ff_p$ is at most $p^{\binom{d+k+1}{k}}$. By the pigeonhole principle there must exist distinct $h_1, h_2 \in \mathcal{H}$ that have the exact same coefficient for each monomial of degree at least $k(p-1)-(d+1)$. Set $h' = h_1 - h_2$, so that all of these monomials are cancelled out in $h'$. Then, $\deg(h') \leq k(p-1)-(d+1)$. If this inequality is in fact an equality, then we are done. Otherwise, let $x^e$ be a maximum degree monomial appearing in $h'$. Let $x^v$ be a monomial of minimum degree such that $x^e \cdot x^v$ has degree $k(p-1)-(d+1)$. Such a monomial indeed exists and we can take $h = h' \cdot x^v$ to get the desired function. Indeed, $x^e \cdot x^v$ is a maximum degree monomial of $h$ and $\supp(h) \subseteq \supp(h') \subseteq S$.
\end{proof}

Combining Lemmas~\ref{lm: dual poly} and \ref{lm: sufficient for local char}, we get that on any set $S \subseteq \Ff_p^k$ of size $\binom{d+k+1}{k}+1$, there is a local characterization $h$ whose support is contained inside $S$. This motivates the basic tester described in Algorithm~\ref{alg: basic tester p}. We will analyze Algorithm~\ref{alg: basic tester p} ignoring erasures. Thus, we give a lower bound on its rejection when assuming that no points are erased. To deal with erasures, we note that each query is uniformly random in a $k$-dimensional affine subspace. Hence by choosing $k$ large enough later, we can give an upper bound on the probability that any query we make is erased. For the remainder of this section, fix $$Q(k) = \binom{d+k+1}{k}+1.$$ 

\begin{algorithm}[!ht]
\caption{A Basic Low Degree Tester over $\Ff_p$}
\label{alg: basic tester p}
\begin{algorithmic}[1]
\Procedure{RandomPoints}{$f, d, k$}
\State Choose $T \in \T_{n,k}$ uniformly at random.
\State Choose $S \subseteq \Ff_q^k$ of size $Q(k)$ uniformly at random.
\State Query $f \circ T(x)$ for each $x \in S$.
\State $\mathcal{H}(S) \gets $ the set of $h$ with support contained in $S$ that satisfy $\RM[n,p,d] = \mathcal{F}_n(h)$.
\If{ $\mathcal{H}(S)$ is empty}
\State \Return Accept.
\EndIf

\State Choose $h \in \mathcal{H}(S)$ uniformly at random. 
\If{ $\inner{f \circ T}{h} = 0$}
\State \Return Accept.
\EndIf
\If{ $\inner{f \circ T}{h} \neq 0$}
\State \Return Reject.
\EndIf
\EndProcedure
\end{algorithmic}
\end{algorithm}

\begin{lemma} \label{lm: random points tester}
    Let $f: \Ff_p^n \xrightarrow[]{} \Ff_p$ be a function. If $\deg(f) \leq d$, then Algorithm~\ref{alg: basic tester p} outputs accept with probability $1$. If $f$ is $\delta$-far from degree $d$, Algorithm~\ref{alg: basic tester p} outputs rejects with probability at least $\frac{\delta}{4Q(k)^2}$.
\end{lemma}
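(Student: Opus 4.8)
The statement splits into a completeness claim and a soundness claim, and both are essentially bookkeeping on top of the lemmas already established. For completeness, suppose $\deg(f) \le d$. The only way Algorithm~\ref{alg: basic tester p} can return Reject is in the last branch, where it has selected some $h \in \mathcal{H}(S)$ and found $\inner{f \circ T}{h} \neq 0$. But by construction every $h \in \mathcal{H}(S)$ satisfies $\RM[n,p,d] = \mathcal{F}_n(h)$, and $f \in \RM[n,p,d]$ means $\langle f \circ T', h \rangle = 0$ for \emph{all} $T' \in \T_{n,k}$ (this is literally the definition of $\mathcal{F}_n(h)$ in Definition~\ref{def: local char}). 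In particular it holds for the random $T$ that the algorithm chose, so the Reject branch is never reached and the algorithm accepts with probability $1$.

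\textbf{Soundness.} Now suppose $f$ is $\delta$-far from degree $d$. The first thing to pin down is that $\mathcal{H}(S)$ is \emph{never} empty: we chose $|S| = Q(k) = \binom{d+k+1}{k}+1$, so Lemma~\ref{lm: dual poly} produces a function $h: \Ff_p^k \to \Ff_p$ with $\supp(h) \subseteq S$ and $\deg(h) = k(p-1)-(d+1)$, and since $k \ge d+1 \ge \lceil \frac{d+1}{p-1}\rceil$, Lemma~\ref{lm: sufficient for local char} upgrades any such $h$ to a local characterization, i.e.\ $\RM[n,p,d] = \mathcal{F}_n(h)$, hence $h \in \mathcal{H}(S)$. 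So the ``$\mathcal{H}(S)$ empty'' accept branch is dead, and the algorithm's behaviour reduces to: pick $T$, pick a uniformly random $h \in \mathcal{H}(S)$, and Reject iff $\inner{f\circ T}{h}\neq 0$. The key observation is that \emph{every} $h$ that ever appears (over all choices of $S$) is a local characterization of the \emph{same} family $\RM[n,p,d]$, so by Theorem~\ref{th: local char to test}, for each fixed such $h$ we have $\Pr_{T}[\inner{f\circ T}{h}\neq 0] \ge \frac{\delta}{4|\supp(h)|^2} \ge \frac{\delta}{4Q(k)^2}$, using $|\supp(h)| \le |S| = Q(k)$ and the fact that $\delta(f, \mathcal{F}_n(h)) = \delta_d(f) \ge \delta$. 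This lower bound is \emph{uniform in $h$}, so it survives averaging: conditioning on any realization of $S$ and the choice of $h \in \mathcal{H}(S)$, the probability of Reject over the independent choice of $T$ is at least $\frac{\delta}{4Q(k)^2}$, hence the overall rejection probability is at least $\frac{\delta}{4Q(k)^2}$.

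\textbf{Main subtlety.} There is essentially no hard step, but the one point that needs care is the order of the random choices and the independence structure: the algorithm picks $T$ \emph{and} $S$ up front and then picks $h$ from $\mathcal{H}(S)$, and $\mathcal{H}(S)$ depends on $S$ but not on $T$. Because the bound from Theorem~\ref{th: local char to test} holds for \emph{each individual} local characterization $h$ with the same right-hand side $\frac{\delta}{4Q(k)^2}$ (the dependence on $h$ only enters through $|\supp(h)| \le Q(k)$, which we bound uniformly), I can first reveal $S$, then reveal $h$, and only then use the randomness of $T$ — which is independent of both — to apply the theorem. So I would write the argument by conditioning: fix $S$ (arbitrary, with $\mathcal{H}(S)\neq\emptyset$ as shown), fix $h\in\mathcal{H}(S)$, and apply Theorem~\ref{th: local char to test} to the still-uniform $T$. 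Averaging over $h$ and then over $S$ preserves the inequality, giving the claimed $\frac{\delta}{4Q(k)^2}$ rejection probability. (The handling of erasures is explicitly deferred in the surrounding text, so it plays no role here.)
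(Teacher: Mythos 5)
Your proof is correct and follows essentially the same approach as the paper: completeness from the definition of $\mathcal{F}_n(h)$, and soundness by conditioning on $h$ (which is independent of $T$) and applying Theorem~\ref{th: local char to test} with the uniform bound $|\supp(h)| \le Q(k)$, then averaging. The one small improvement over the paper's write-up is that you explicitly verify via Lemmas~\ref{lm: dual poly} and~\ref{lm: sufficient for local char} that $\mathcal{H}(S)$ is never empty, which the paper leaves implicit but is needed for the averaging to give the stated bound.
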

\begin{proof}
    We begin with the completeness of the tester, and towards 
    this end assume that $f$ has degree at most $d$.
    Since $\deg(h)\leq k(p-1) - (d+1)$ for all $h \in \mathcal{H}$, and $\deg(f \circ T) \leq \deg(f)$, it follows from Lemma~\ref{lm: monomial inner product} and the bilinearity of inner product that $\langle f \circ T, h \rangle = 0$ for all $T$. 
 
    We move on to the second part of the lemma. For $h \in \mathcal{H}$, let $\Pr(h)$ denote the probability that $h$ is chosen in step $3$ and let $\rej_h(f)$ denote the probability over $T \in \T_{n,k}$ that $\langle f \circ T, h \rangle \neq 0$. Notice that the random $T$ and $h$ in Algorithm~\ref{alg: basic tester p} induce a product distribution over $\T_{n,k} \times \mathcal{H}$ where the distribution of $\T_{n,k}$ is uniform. Conditioned on any arbitrary $h \in \mathcal{H}$ being chosen in line 9, $T$ is uniform over $\T_{n,k}$ and the test rejects with probability at least $\rej_h(f)$. By Theorem~\ref{th: local char to test}, for every $h \in \mathcal{H}$, $\rej_h(f)\geq \frac{\delta}{4|\supp(h)|^2} \geq  \frac{\delta}{4Q(k)^2}$.

    Thus, by generating the distribution over $T$ and $h$ by first choosing $h \in \mathcal{H}$ with probability $\Pr(h)$ and then choosing $T \in \T_{n,k}$ uniformly. This makes it clear that Test 1 rejects with probability at least
    \[
    \sum_{h \in \mathcal{H}}\Pr(h) \rej_h(f) \geq  \frac{\delta}{4Q(k)^2}. \qedhere
    \]
\end{proof}

In order to output reject with probability at least $2/3$ in the far from degree $d$ case, we will repeat Algorithm~\ref{alg: basic tester p}. Henceforth, set $k = 20d\log_p(30t/\delta)$. We will show that the resulting tester, described in Algorithm~\ref{alg: fp test},
is $t$-online-erasure resilient. Before showing completeness and soundness, we give a bound on $Q(k)$ which will be helpful for calculations,
\[
Q(k) \leq (3k/d)^{d+1}  \leq (60 \log_p(30t/\delta))^{d+1}.
\]
By the bound on $t$ stated at the start of the section, we have $k \leq n$, so our tester is valid, in that $f \circ T$ is always well defined for $T \in \T_{n,k}$.

\begin{algorithm}[!ht]
\caption{A $t$-online-erasure resilient tester over $\Ff_p$.}
\label{alg: fp test}
\begin{algorithmic}[1]
\Procedure{ErasureResilient}{$f, d, \delta$}
\State Set $k = 100d\log \left( \frac{100dt}{\delta}\right) + q$.
\For{$i = 1 \to \frac{100Q(k)^2}{\delta}$} \State Run \texttt{RandomPointsTest}$(f,d,k)$
\State If Reject is outputted, \Return Reject.
\EndFor
\State \Return Accept.
\EndProcedure
\end{algorithmic}
\end{algorithm}

The following result shows completeness and soundness for Algorithm~\ref{alg: basic tester p} and hence proves Theorem~\ref{th: main p}
\begin{theorem}
Given $f: \Ff_p^n \xrightarrow[]{} \Ff_p$, a degree parameter $d$, a distance parameter $\delta$, and an erasure parameter $t$, Algorithm~\ref{alg: fp test} accepts $f$ with probability $1$ if $\deg(f) \leq d$ and rejects $f$ with probability at least $2/3$ if $f$ is $\delta$-far from degree $d$. The number of queries made is
$O\left(\left(\frac{\log(t/\delta)}{\delta}\right)^{3d+3} \right)$.
\end{theorem}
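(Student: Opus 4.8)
The plan is to combine the building blocks already assembled: Lemma~\ref{lm: random points tester} gives completeness and a per-round rejection probability of at least $\delta/(4Q(k)^2)$ in the erasure-free world, and the only remaining work is (i) amplifying to constant soundness by repetition, (ii) bounding the query complexity, and (iii) arguing that the online erasures cannot hurt us, because every point we query is uniformly distributed in a $k$-dimensional affine subspace and $k$ is chosen large enough that the probability of ever hitting an erased point is tiny.

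First I would handle \textbf{completeness}. Each invocation of \texttt{RandomPoints} accepts with probability $1$ when $\deg(f)\le d$ by the first part of Lemma~\ref{lm: random points tester}, and the only subtlety is erasures: but since erased queries are simply a worst case (they can only prevent us from forming $\mathcal{H}(S)$, in which case we accept, or prevent us from querying some point of $\supp(h)$), one checks that on a low-degree $f$ the tester never outputs Reject regardless of the adversary. Hence Algorithm~\ref{alg: fp test} accepts with probability $1$.

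Next, \textbf{soundness}. Suppose $f$ is $\delta$-far from degree $d$. I would first argue the erasure-free bound survives the presence of the adversary up to a small additive loss. Fix $k=20d\log_p(30t/\delta)$ (matching the bound $t\le \frac{\delta}{30}p^{n/(20d)}$, which guarantees $k\le n$). In a single round we query $Q(k)$ points, each marginally uniform in an affine subspace of size $p^k$; over all rounds the total number of queries is $N:=\frac{100Q(k)^2}{\delta}\cdot Q(k)$, and before the $j$-th query the adversary has erased at most $(j-1)t$ points, so the probability that the $j$-th query lands on an erased point is at most $Nt/p^k$. By the choice of $k$ and the displayed bound $Q(k)\le (60\log_p(30t/\delta))^{d+1}$, one has $p^k = (30t/\delta)^{20d}$, which dwarfs $Nt$, so a union bound shows that with probability at least, say, $0.99$ no query is ever erased. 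Conditioned on that event the execution is identical to the erasure-free execution, so each round rejects independently with probability at least $\delta/(4Q(k)^2)$, and over $\frac{100Q(k)^2}{\delta}$ rounds the probability that every round accepts is at most $\left(1-\frac{\delta}{4Q(k)^2}\right)^{100Q(k)^2/\delta}\le e^{-25}$. Combining, the overall rejection probability is at least $0.99-e^{-25}\ge 2/3$.

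Finally, the \textbf{query complexity}: the number of queries is $\frac{100Q(k)^2}{\delta}\cdot Q(k)=O(Q(k)^3/\delta)$, and using $Q(k)\le(60\log_p(30t/\delta))^{d+1}=O(\log^{d+1}(t/\delta))$ gives $O\!\left(\log^{3d+3}(t/\delta)/\delta\right)$, matching the claimed bound. The step I expect to be the main obstacle — or at least the one requiring the most care — is the erasure accounting: one must make sure that the marginal-uniformity argument is applied correctly (the adversary's erasures after query $j-1$ are a function of the history, but the $j$-th query point's marginal distribution is still uniform, so the expected number of erased-on-arrival queries is bounded even though the events are not independent), and that the choice of $k$ in Algorithm~\ref{alg: fp test} is reconciled with the $k=20d\log_p(30t/\delta)$ used in the $Q(k)$ estimates. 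With the marginal-uniformity observation in hand, everything else is a routine union bound and a standard repetition calculation.
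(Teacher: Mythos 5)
Your proof is correct and follows essentially the same route as the paper: completeness directly from Lemma~\ref{lm: random points tester}, soundness by amplifying the per-round rejection probability $\delta/(4Q(k)^2)$ over $100Q(k)^2/\delta$ rounds, and then absorbing the adversary via the observation that each query is marginally uniform in a $p^k$-size affine subspace together with a union bound over the (at most) $Q(k)\cdot 100Q(k)^2 t/\delta$ erased points. One small caution: the phrase ``conditioned on no erasure the execution is identical and each round rejects independently'' is slightly imprecise (conditioning on the no-erasure event can in principle skew the query distribution under an adaptive adversary); the cleaner justification, which your final numerical combination $0.99 - e^{-25}$ actually implements, is the coupling/union-bound form $\Pr[\text{reject}] \ge \Pr[\text{reject in erasure-free run}] - \Pr[\text{some erasure}]$, exactly as the paper uses ($3/4 - 1/100$). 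You are also right to flag that line~2 of Algorithm~\ref{alg: fp test} does not literally match $k = 20d\log_p(30t/\delta)$ used in the surrounding text; that appears to be a typo in the paper carried over from the $\Ff_q$ algorithm.
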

\begin{proof}
    It is clear from Lemma~\ref{lm: random points tester} that if $\deg(f) \leq d$, then $f$ is accepted with probability $1$.

    Now suppose that $f$ is $\delta$-far from degree $d$. By Lemma~\ref{lm: random points tester}, with probability at least $3/4$, there is at least one iteration where the basic tester outputs reject, so it remains to bound the probability that any queried point is erased.
    The total number of erased points is at most $\frac{100Q(k)^3t}{\delta}$
    and each query is uniformly random in some $k$-dimensional affine subspace. Therefore by a union bound, the probability that any individual query is erased is at most,
    \[
     \frac{100Q(k)^3t}{\delta p^k} \cdot \frac{Q(k)^3t}{\delta} \leq \frac{100t^2(30 \log_p(60t/\delta))^{6d+6}}{\delta^2(30t/\delta)^{20d}} \leq \frac{1}{100}.
    \]
    Overall, this shows that if $f$ is $\delta$-far from degree $d$ then  Algorithm~\ref{alg: fq test} outputs rejects with probability at least $3/4 - 1/100 \geq 2/3$.
\end{proof}

\section{The Non-Prime Field Case}
Our algorithm for low degree testing over non-prime fields follows the same outline as that of the prime field case, but its analysis includes a few more complications. For the remainder of this section, write $d+1 = s \cdot (q-q/p) + r$ for an integer $s$ and an integer $0 \leq r < q-q/p$. For distance parameter $\delta$, our test is $t$-online-erasure resilient for all $t \leq \frac{\delta}{100}q^{n/(100(d+q)) - 1}$.

\subsection{Local Characterizations for  \texorpdfstring{$\RM[n,q,d]$}{RM[n,q,d]}}
Our first goal is to show the following lemma which gives sufficient conditions for $h: \Ff_q^k \xrightarrow[]{} \Ff_q$ to be a local characterization. Its proof has appeared before, first in \cite{RZS} and later restated closer to the language used here in the appendix of \cite{MZ}. We include it below for completeness.

\begin{lemma} \label{lm: sufficient local char q}
    If $h: \Ff_q^k \xrightarrow[]{} \Ff_q$ satisfies 
    \begin{itemize}
        \item $\deg(h) \leq (q-1)\cdot k - (d+1)$.
        \item $h$ contains the monomial $\left(\prod_{i=1}^{s}x_i^{q/p-1}\right)x_{s+1}^{q-1-r'} \left(\prod_{j =s+2}^{k} x_{j}^{q-1}\right)$ for all $r \leq r' \leq q-1$,
    \end{itemize}
    then $\RM[n,q,d] = \mathcal{F}_n(h)$.
\end{lemma}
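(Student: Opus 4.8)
The plan is to mimic the structure of the prime-field proof of Lemma~\ref{lm: sufficient for local char}, establishing the two inclusions $\RM[n,q,d] \subseteq \mathcal{F}_n(h)$ and $\mathcal{F}_n(h) \subseteq \RM[n,q,d]$ separately. The first inclusion should be immediate: if $f$ has degree at most $d$, then for any $T \in \T_{n,k}$ the composition $f \circ T$ has degree at most $d$, so for any monomial $x^e$ in $f \circ T$ and any monomial $x^{e'}$ in $h$ we have $|e + e'|_1 \leq d + (q-1)k - (d+1) = (q-1)k - 1$; hence the reduced product cannot be $(q-1,\ldots,q-1)$ (whose degree is exactly $(q-1)k$), and Lemma~\ref{lm: monomial inner product} gives $\langle x^e, x^{e'} \rangle = 0$. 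Bilinearity of the inner product then yields $\langle f \circ T, h \rangle = 0$ for all $T$, so $f \in \mathcal{F}_n(h)$. I should double-check that reduction of exponents only decreases total degree, so that the degree bound is preserved under taking the reduced product — this is where the base-$p$ carrying structure enters, and it is the reason the statement is phrased in terms of $q - q/p$ rather than $q-1$.

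For the reverse inclusion, the goal is to show $\mathcal{F}_n(h)$ contains no monomial of degree $> d$, so by Lemma~\ref{lm: monomial extraction} (noting $\mathcal{F}_n(h)$ is linear and affine-invariant — this should be checked quickly from the definition, using that $\langle (f\circ T')\circ T, h\rangle = \langle f \circ (T'\circ T), h\rangle$) it suffices to exhibit, for a judiciously chosen monomial $x^{e'}$ of degree exactly $d+1$, an affine transformation $T$ (in fact the coordinate projection onto the first $k$ coordinates) with $\langle x^{e'} \circ T, h\rangle \neq 0$, and then invoke the shadow/shift machinery to propagate this to \emph{all} degree-$(d+1)$ monomials. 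Concretely: the distinguished monomial in $h$ is $m = \bigl(\prod_{i=1}^s x_i^{q/p-1}\bigr) x_{s+1}^{q-1-r'}\bigl(\prod_{j=s+2}^k x_j^{q-1}\bigr)$; I want $x^{e'}$ to be the "complementary" monomial, i.e.\ $e'_i = q-1-(q/p-1) = q - q/p$ for $i \le s$, $e'_{s+1} = r'$, and $e'_j = 0$ for $j \ge s+2$, so that $x^{e'} \cdot m$ has reduced exponent vector $(q-1,\ldots,q-1)$ on the first $k$ coordinates. Then $\deg(x^{e'}) = s(q-q/p) + r' $, and I will pick $r' = r$ to make this exactly $d+1$; Lemma~\ref{lm: monomial inner product} gives $\langle x^{e'} \circ T, h \rangle = (\pm 1)\cdot(\text{coefficient of }m\text{ in }h) \neq 0$ since $h$ contains $m$ (this is precisely why the hypothesis demands $h$ contain the monomial for $r' = r$, among others). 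Hence $x^{e'} \notin \mathcal{F}_n(h)$.

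To finish, I need that $\mathcal{F}_n(h)$ contains \emph{no} monomial of degree $\geq d+1$, not just that it misses this one. The clean way is: if $\mathcal{F}_n(h)$ contained some monomial $x^{e''}$ with $|e''|_1 \geq d+1$, then by Lemma~\ref{lm: shadow closed} (shadow-closure of the support of a linear affine-invariant family) and Lemma~\ref{lm: monomial shift} (shifting individual degrees around while preserving total degree, valid because the shift parameter lies in the $p$-shadow of the source exponent) one can move down to $x^{e'}$ — the point being that in the non-prime case the shift of Lemma~\ref{lm: monomial shift} is only available when $m \leq_p e_j$, so the decomposition $d+1 = s(q-q/p) + r$ is exactly calibrated so that the exponent $q - q/p$ has full $p$-shadow room below $q-1$, and $r < q-q/p$ similarly. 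This is the delicate combinatorial point and the main obstacle: unlike Lemma~\ref{lm: p shadow closed}, here I cannot freely rearrange total degree among coordinates, so I must verify that the specific target $e'$ with entries in $\{q-q/p, r, 0\}$ is reachable from \emph{every} degree-$(d+1)$ exponent vector via $\leq_p$-shifts and shadow-containments — equivalently, that every degree-$(d+1)$ monomial is "at least as large" as $x^{e'}$ in the partial order generated by these operations. I expect to argue this by a greedy/induction argument: repeatedly apply Lemma~\ref{lm: monomial shift} to push mass onto the low-indexed coordinates up to the cap $q-q/p$ (filling coordinates $1,\ldots,s$), leaving a remainder $\le r < q-q/p$ on the next coordinate, then use Lemma~\ref{lm: shadow closed} to delete the leftover high-order bits — and the reader should check that each such step is legal under the base-$p$ digit constraints, which is where the bulk of the careful bookkeeping lies. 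Once that is done, $\mathcal{F}_n(h) \subseteq \RM[n,q,d]$, and combined with the first inclusion we get equality.
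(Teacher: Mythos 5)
Your outline of the first inclusion is correct and matches the paper (though your aside attributing the $q-q/p$ decomposition to the reduction-of-exponents step in that inclusion is a misattribution: exponent reduction $x^a\mapsto x^{a-(q-1)}$ always lowers total degree for any $q$, independent of base-$p$ structure; the $q-q/p$ decomposition is calibrated for the shift lemma, which is only invoked in the reverse inclusion).

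The reverse inclusion, however, has a genuine gap, and it is precisely the one you flag as "the delicate combinatorial point." You aim to reduce any monomial of degree $\geq d+1$ in $\mathcal{F}_n(h)$ down to the single target $e'$ with $e'_i=q-q/p$ for $i\le s$, $e'_{s+1}=r$, and $e'_j=0$ otherwise, of total degree exactly $d+1$ — and you treat the hypothesis on the other $r'\in\{r+1,\dots,q-1\}$ as coming "among others," i.e., as inessential spares. But the range is essential, and here is why your plan fails. Starting from a monomial $x^e$ with $|e|_1>d$, the only legal degree-lowering operation is $p$-shadow reduction (Lemma~\ref{lm: shadow closed}), which requires $e'\leq_p e$ digitwise in base $p$; the shift lemma (Lemma~\ref{lm: monomial shift}) strictly preserves total degree. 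When one takes the prefix $e'=(e_1,\dots,e_i,0,\dots,0)$ with $i$ the smallest index for which $e_1+\cdots+e_i>d$ (so that $e'\leq_p e$ automatically), the resulting degree $|e'|_1$ lands anywhere in the interval $(d,\,d+q-1]$, not at $d+1$; and the $p$-shadow of a single exponent is not an interval (e.g., over $\Ff_4$ the $2$-shadow of $2$ is $\{0,2\}$), so you generally cannot shave $|e'|_1$ down to $d+1$ exactly. Since all subsequent shifts preserve $|e'|_1$, the canonical form you reach is $\bigl(\prod_{i\le s}x_i^{q-q/p}\bigr)x_{s+1}^{r'}$ with $r'=|e'|_1-s(q-q/p)$ potentially strictly greater than $r$. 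The paper's hypothesis demanding $h$ contain the target monomial for every $r\le r'\le q-1$ is there precisely to absorb this overshoot; restricting to $r'=r$ as you do leaves the argument unable to derive a contradiction from monomials whose shadow reduction lands above degree $d+1$.

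To repair your proof you should follow the paper's route: after the shadow step, do not attempt to reach degree $d+1$; instead run the shift argument at whatever degree $|e'|_1$ you have, arrive at $\bigl(\prod_{i\le s}x_i^{q-q/p}\bigr)x_{s+1}^{r'}$ for the corresponding $r'$, and then invoke the hypothesis for that particular $r'$ (which is why it is stated for the whole range) together with Lemma~\ref{lm: monomial inner product} to get the nonzero inner product and the contradiction.
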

We split the proof of Lemma~\ref{lm: sufficient local char q} into two claims. Fix $h: \Ff_q^k \xrightarrow[]{} \Ff_q$ to be a function satisfying the conditions above.
\begin{claim}
    $\RM[n,q,d] \subseteq \mathcal{F}_n(h)$.
\end{claim}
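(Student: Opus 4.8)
\textbf{Proof plan for the claim $\RM[n,q,d] \subseteq \mathcal{F}_n(h)$.}
The plan is to imitate the corresponding step in the prime field case (the first half of the proof of Lemma~\ref{lm: sufficient for local char}), using Lemma~\ref{lm: monomial inner product} together with the degree bound $\deg(h) \leq (q-1)k - (d+1)$. First I would fix an arbitrary $f \in \RM[n,q,d]$ and an arbitrary affine transformation $T \in \T_{n,k}$, and observe that $\deg(f \circ T) \leq \deg(f) \leq d$. By bilinearity of the inner product, it suffices to show $\langle x^e, x^{e'} \rangle = 0$ for every monomial $x^e$ appearing in $f\circ T$ and every monomial $x^{e'}$ appearing in $h$.

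For such a pair, I would form the reduced product $x^{e''} = x^e \cdot x^{e'}$, where each individual degree is brought into $\{0,\ldots,q-1\}$; note that reducing an exponent modulo the relation $\alpha^q = \alpha$ (i.e.\ replacing $e_i''$ by the appropriate representative when $e_i + e_i' \geq q$) can only decrease or preserve the sum of the exponents, so $|e''|_1 \leq |e|_1 + |e'|_1$. The key numerical estimate is then
\[
|e''|_1 \leq |e|_1 + |e'|_1 \leq d + \big((q-1)k - (d+1)\big) = (q-1)k - 1 < (q-1)k,
\]
so $e''$ cannot be the all-$(q-1)$ vector. By Lemma~\ref{lm: monomial inner product}, $\langle x^e, x^{e'}\rangle = 0$. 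Summing over all pairs of monomials via bilinearity gives $\langle f \circ T, h\rangle = 0$, and since $T \in \T_{n,k}$ was arbitrary, $f \in \mathcal{F}_n(h)$ by Definition~\ref{def: local char}.

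I expect this direction to be the \emph{easy} half of Lemma~\ref{lm: sufficient local char q}; the only point requiring a moment's care is the claim that passing to the reduced exponent vector $e''$ does not increase the degree (reducing $e_i+e_i'$ from a value $\geq q$ down to a representative in $\{0,\dots,q-1\}$ decreases it by a multiple of $q-1$, hence strictly decreases it, so the bound $|e''|_1 \le |e|_1+|e'|_1$ is in fact generous). The genuinely delicate part of Lemma~\ref{lm: sufficient local char q} — showing the reverse inclusion $\mathcal{F}_n(h) \subseteq \RM[n,q,d]$, where one must exploit the second hypothesis about $h$ containing the prescribed family of monomials for all $r \le r' \le q-1$ and invoke the shadow-closure Lemmas~\ref{lm: shadow closed} and~\ref{lm: monomial shift} — is handled in the next claim, not here.
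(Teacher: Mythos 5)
Your proof is correct and matches the paper's approach exactly: the paper simply says this claim is ``the same as the first part of the proof of Lemma~\ref{lm: sufficient for local char},'' and you have spelled out that argument with $q$ in place of $p$, correctly noting the one subtlety that reducing exponents via $\alpha^q=\alpha$ only decreases $|e''|_1$ so the degree bound $|e''|_1 \le d + (q-1)k-(d+1) < (q-1)k$ rules out $e''=(q-1,\ldots,q-1)$.
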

\begin{proof}
    This is a direct application of Lemma~\ref{lm: monomial inner product} and the same as the first part of the proof of Lemma~\ref{lm: sufficient for local char}
\end{proof} 
\begin{claim}
    $\mathcal{F}_n(h) \subseteq \RM[n,q,d]$.
\end{claim}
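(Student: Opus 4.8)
The plan is to mirror the structure of the prime-field argument in Lemma~\ref{lm: sufficient for local char}, but with the monomial bookkeeping over $\mathbb{F}_q$ done via the $p$-shadow order rather than the ordinary order on exponents. Since $\mathcal{F}_n(h)$ is linear and affine-invariant, by the Monomial Extraction Lemma (Lemma~\ref{lm: monomial extraction}) it suffices to show that $\mathcal{F}_n(h)$ contains no monomial of degree $d+1$: any monomial of strictly larger degree would, by Lemma~\ref{lm: p shadow closed}'s non-prime analogue (repeated application of Lemma~\ref{lm: monomial shift} together with the shadow-closed Lemma~\ref{lm: shadow closed}), force some degree-$(d+1)$ monomial into $\mathcal{F}_n(h)$, a contradiction. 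So the crux is to exhibit, for every exponent vector $e'\in\{0,\dots,q-1\}^n$ with $|e'|_1 = d+1$, an affine transformation $T\in\T_{n,k}$ with $\langle x^{e'}\circ T, h\rangle \neq 0$, which certifies $x^{e'}\notin\mathcal{F}_n(h)$.

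First I would reduce to a canonical degree-$(d+1)$ monomial on the first few coordinates. Using Lemma~\ref{lm: monomial shift} (with arbitrary index pairs, as remarked after its proof), any degree-$(d+1)$ monomial can be transformed into one whose exponent is "greedily packed": put as many coordinates as possible at the value $q-q/p$ — wait, more precisely, the relevant packing is dictated by the second hypothesis on $h$. Write $d+1 = s(q - q/p) + r$. The target monomial to hit is $x^{e'}$ with $e' = (q/p-1,\dots,q/p-1, r', 0,\dots,0)$ on coordinates $1,\dots,s+1$ for a suitable $r$ with $r \le r' \le q-1$; note $|e'|_1$ can be taken to be $d+1$ by choosing $r' = r$, or a larger admissible value if a shadow step is needed. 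The point of the shift lemma is that $q/p - 1 \le_p q - 1$ and the "mass" $q/p$ redistributes under $x_i \mapsto x_i + x_j$ in a way governed by Lucas's theorem; I would check that any degree-$(d+1)$ exponent vector lies in the $p$-shadow of, or maps under these affine moves to, the canonical one.

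Then I would take $T$ to be the transformation that is the identity on coordinates $1,\dots,s+1$ and sends all other coordinates to $0$, so that $x^{e'}\circ T = x^{e'}$ as a polynomial in $x_1,\dots,x_{s+1}$. By bilinearity, $\langle x^{e'}\circ T, h\rangle$ is a sum over the monomials $x^a$ of $h$ of $C_a \langle x^{e'}, x^a\rangle$, and by Lemma~\ref{lm: monomial inner product} the only surviving term is the one where $x^{e'}\cdot x^a$ reduces to $(q-1,\dots,q-1)$ on every coordinate. For the canonical $e'$, the matching $a$ is precisely $\big(\prod_{i=1}^s x_i^{q/p-1}\big)\,x_{s+1}^{q-1-r'}\,\big(\prod_{j=s+2}^k x_j^{q-1}\big)$ — the product of the complementary exponents on coordinates $1,\dots,s+1$ and $q-1$ on the rest — which is exactly the monomial the second hypothesis guarantees appears in $h$ (here one uses $2(q/p-1) < q-1$, i.e.\ $q/p-1 + q/p - 1 \le q-2$, so no reduction mod $q-1$ occurs on those coordinates and the "complementary exponent" is well-defined in $\{0,\dots,q-1\}$). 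Hence $\langle x^{e'}\circ T, h\rangle = C_a (-1)^n \neq 0$, so $x^{e'}\notin\mathcal{F}_n(h)$, and by the shadow machinery no degree $\ge d+1$ monomial lies in $\mathcal{F}_n(h)$, giving $\mathcal{F}_n(h)\subseteq\RM[n,q,d]$.

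The main obstacle I anticipate is the combinatorial claim that \emph{every} degree-$(d+1)$ monomial can be routed, via affine moves (Lemma~\ref{lm: monomial shift}) and shadow steps (Lemma~\ref{lm: shadow closed}), to a monomial for which the explicit dual monomial listed in the hypothesis produces a nonzero inner product — in other words, verifying that the single family of monomials $\{(\prod x_i^{q/p-1})x_{s+1}^{q-1-r'}(\prod x_j^{q-1}) : r\le r'\le q-1\}$ really is "enough" to detect all of degree $d+1$. This is a careful Lucas-theorem / base-$p$ digit argument about how exponent mass can be concentrated onto few coordinates without exceeding $q/p-1$ per coordinate until the last partial coordinate, and it is exactly the point at which the non-prime case genuinely differs from the prime case (where $r' = r$, $s = \lceil(d+1)/(p-1)\rceil$ trivially works). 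I would isolate this as the one nontrivial lemma and prove the rest by the template of Lemma~\ref{lm: sufficient for local char}.
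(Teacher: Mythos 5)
Your overall strategy matches the paper's: reduce via Monomial Extraction and shift/shadow moves to a canonical monomial, then exhibit a dual monomial of $h$ that detects it via Lemma~\ref{lm: monomial inner product}. However, the proposal contains a concrete algebraic error that breaks the argument, and it also leaves the one genuinely nontrivial step unproven.

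The concrete error: you take the canonical high-degree monomial to be $e' = (q/p-1,\dots,q/p-1, r', 0,\dots,0)$ and pair it with the dual monomial $a = (q/p-1,\dots,q/p-1,\,q-1-r',\,q-1,\dots,q-1)$ appearing in $h$. On each of coordinates $1,\dots,s$ the sum of exponents is then $2(q/p-1)$, and you observe that $2(q/p-1) < q-1$, concluding that ``no reduction occurs.'' But Lemma~\ref{lm: monomial inner product} requires the reduced product exponent to equal $q-1$ in \emph{every} coordinate for the inner product to be nonzero; with $2(q/p-1) < q-1$ the inner product is $0$, not $(-1)^n$ as you claim. The canonical monomial the paper actually reaches is $(\prod_{i=1}^s x_i^{\,q - q/p})\, x_{s+1}^{r'}$ — exponents $q - q/p$, not $q/p - 1$, so that $(q - q/p) + (q/p - 1) = q-1$ exactly. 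You have conflated the exponents of the detected monomial with those of its dual in $h$.

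Two further gaps. First, the opening reduction ``it suffices to rule out monomials of degree exactly $d+1$'' is not valid over non-prime fields: there is no non-prime analogue of Lemma~\ref{lm: p shadow closed}. For instance in $\Ff_4$, the monomial $x_1^2 x_2^2$ (degree $4$) cannot be routed by shifts and $p$-shadows to any degree-$3$ monomial, since every reachable exponent vector has all coordinates in $\{0,2\}$. The paper instead truncates the offending exponent vector to a prefix of degree in $[d+1, d+q-1]$, and this is precisely why the hypothesis on $h$ demands a whole family of dual monomials indexed by $r\le r'\le q-1$ rather than a single one. Second, you explicitly flag the routing claim — that every such prefix can be driven by shift/shadow moves to the canonical form — as the ``one nontrivial lemma'' and do not prove it; this is the heart of the paper's argument, carried out via the monovariant $c(e) = \sum_{i\le s}\max(0,\,(q-q/p)-e_i)$, which strictly decreases under the two types of shift moves while preserving $|e'|_1$. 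Without that argument the proof is incomplete.
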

\begin{proof}
Suppose for the sake of contradiction that there is $g \in \mathcal{F}_n(h)$ with degree greater than $d$ and let $x^e$ be a maximum degree monomial of $g$. By Lemma~\ref{lm: monomial extraction}, $x^e \in \mathcal{F}_n(h)$. Take the smallest index $i$ such that $e_1 + \cdots + e_i > d$, and let $e' = (e_1, \ldots, e_i, 0, \ldots, 0) \in \{0,\ldots, q-1\}^k$. Then, $e' \leq_p e$ so $x^{e'} \in \mathcal{F}_n(h)$, and $|e'|_1 \geq s(q-q/p) + r'$ for some $r \leq r' \leq q-1$.

Next we will repeatedly use Lemma~\ref{lm: monomial shift} to obtain a monomial in $\mathcal{F}_n(h)$ where each of the first $s$ individual degrees are at least $q-q/p$. To this end, define
\[
   c(e) = \sum_{i=1}^{s} \max(0, (q-q/p)-e_i).
\]
We will abuse notation and still refer to the monomial as $x^{e'}$ after each application of Lemma~\ref{lm: monomial shift}. If $e'$ is not of the desired form, then one of the following must be true:
\begin{itemize}
       \item There is $j > s$ such that $e_j > 0$, in which case we simply find some $p^{m} \leq_p e_j$ and apply Lemma~\ref{lm: monomial shift} to obtain the monomial $x_i^{e_i+p^m}x_j^{e_j - p^m}$ in place of $x_i^{e_i}x_j^{e_j}$,
       \item There is $j \leq s$ such that $e_j > q-q/p$. In this case we can find $p^m$ such that $e_j - p^m \geq q-q/p$, and apply Lemma~\ref{lm: monomial shift} to obtain the monomial $x_i^{e_i+p^m}x_j^{e_j - p^m}$ in place of $x_i^{e_i}x_j^{e_j}$.
\end{itemize}
In either case, $c(e')$ strictly decreases, while $|e'|_1$ does not change, so we will eventually end up with $e'$ such that $e'_i \geq q-q/p$ for $1 \leq i \leq s$, and $|e'|_1 = s(q-q/p)+r'$. At this point $\sum_{i = 1}^{s}e'_i - (q-q/p) + \sum_{i = s+1}^{k} e'_i = r' \leq q-1$ so we may apply Lemma~\ref{lm: monomial shift} to shift all degree above $q-q/p$ in the first $s$ coordinates, and all degree in the $s+2$ through $k$th coordinates, onto $e'_i$, to obtain the monomial
\[
\left(\prod_{i=1}^s x_i^{q-q/p}\right) x_{s+1}^{r'},
\]
where $r \leq r' \leq q-1$. As we only applied Lemma~\ref{lm: monomial shift}, this monomial must be in $\mathcal{F}_n(h)$. However, by the second condition of Lemma~\ref{lm: sufficient local char q} and Lemma~\ref{lm: monomial inner product}, this monomial and $h$ have nonzero inner product, leading to a contradiction. It follows that $\mathcal{F}_n(h)$ does not contain any functions of degree greater than $d$, completing the proof.
\end{proof}

\subsection{An Erasure Resilient Low Degree Tester}
We now describe our $t$-erasure resilient degree $d$ tester. Similar to the prime field case, we show that a set of random points contains a local characterization, and then we repeat this tester to reject with constant probability in the far from degree $d$ case. 

Fix a dimension $k$ to be determined later. Set $e^{\star} \in \{0, \ldots, q-1\}^{k}$ such that
\begin{itemize}
    \item $e^{\star}_i= q-q/p$ for $1 \leq i \leq s$,
    \item $e^{\star}_{s+1} = q-1$,
    \item $e^{\star}_j = 0$ for $j \geq s+2$.
\end{itemize}
Let $d^{\star} = |e^{\star}|_1$, let $\mathcal{G} = \{g: \Ff_q^k \xrightarrow[]{} \Ff_q \; | \; \deg(g) \leq d^{\star}-1\}$, and let $\mathcal{E} = \{e \in \{0,\ldots, q-1\}^k \; | \; |e|_1 \leq d^{\star}-1 \}$. Note that $d^{\star} \leq d+q$, $s \leq \lceil \frac{d+1}{q-q/p}\rceil$, $|\mathcal{E}| \leq \binom{k+d^{\star}}{d^{\star}}$, and $|\mathcal{G}| \leq q^{\binom{k+d^{\star}}{d^{\star}}}$. Now construct the following system of equations over the variables $\{z_{\alpha}\}_{\alpha \in S}$:
\begin{itemize}
    \item For each $e' \in \mathcal{E}$, include the equation $\sum_{\alpha \in S} z_{\alpha} \alpha^{e'} = 0$.
    \item Include the equation $\sum_{\alpha \in S} z_{\alpha} \alpha^{e^{\star}} = 1$.
\end{itemize}

We first claim that over uniformly random $S \subseteq \Ff_q^k$ of size $Q(k) = 2q^{s+1}\log(q)\binom{k+d^{\star}}{d^{\star}}$, this system of equations is solvable with high probability. To show this, we first state the following simple observation, which can also be seen through the Schwartz-Zippel lemma or combinatorial nullstellensatz. 

\begin{lemma} \label{lm: Schwartz Zippel Modification}
    For any $g \in \mathcal{G}$,
    \[
    \Pr_{\alpha \in \Ff_q^k}[g(\alpha) \neq x^{e^{\star}}(\alpha)] \geq \frac{1}{q^{s+1}}.
    \]
\end{lemma}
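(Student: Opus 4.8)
The plan is to reduce the claim to the fact that $x^{e^\star}$ is not computed by any polynomial of degree at most $d^\star - 1 = |e^\star|_1 - 1$ on a positive fraction of inputs, and to make this quantitative by exploiting the special product structure of $x^{e^\star}$. Fix $g \in \mathcal{G}$, so $\deg(g) \le d^\star - 1 < |e^\star|_1$, and consider the function $\Delta(\alpha) = x^{e^\star}(\alpha) - g(\alpha)$. Our goal is to show $\Delta$ is nonzero on at least a $q^{-(s+1)}$ fraction of $\Ff_q^k$. Note that $x^{e^\star}$ involves only the first $s+1$ variables, with degrees $q-q/p$ in each of the first $s$ coordinates and $q-1$ in coordinate $s+1$.

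First I would reduce to $s+1$ variables by a restriction/averaging argument: choosing $(\alpha_{s+2},\dots,\alpha_k)$ uniformly at random and fixing them, $g$ restricts to some polynomial $g'$ on $\Ff_q^{s+1}$ of degree at most $d^\star - 1$, while $x^{e^\star}$ is unchanged. It suffices to lower bound, for every such restriction, the probability over $(\alpha_1,\dots,\alpha_{s+1})$ that $x^{e^\star} \ne g'$; so without loss of generality $k = s+1$. Now I want to find an explicit substructure — a combinatorial rectangle or a low-dimensional affine slice — on which $x^{e^\star} - g$ is forced to be nonzero somewhere, with controlled density. The natural candidate is to isolate the monomial $x_{s+1}^{q-1}$: write $g(\alpha) = \sum_{j=0}^{q-1} \alpha_{s+1}^j g_j(\alpha_1,\dots,\alpha_s)$ in its expansion in the last variable; since $\deg g \le |e^\star|_1 - 1$ and the $x_{s+1}^{q-1}$-part of $x^{e^\star}$ is $\left(\prod_{i=1}^s x_i^{q-q/p}\right) x_{s+1}^{q-1}$, the coefficient polynomial $g_{q-1}$ has degree at most $\left(\sum_{i=1}^s (q-q/p)\right) - 1$, hence is strictly below $\prod_{i=1}^s x_i^{q-q/p}$ in degree. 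Using Lemma~\ref{lm: power sums} (summing over $\alpha_{s+1}$ kills all powers except $q-1$), the inner product / averaging trick shows that for a random point, conditioned on $\alpha_1,\dots,\alpha_s$, the restriction in $\alpha_{s+1}$ disagrees from $x^{e^\star}$ unless $g_{q-1}(\alpha_1,\dots,\alpha_s) = \prod_i \alpha_i^{q-q/p}$, and each such disagreement event has probability at least $1/q$ over $\alpha_{s+1}$.

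This reduces matters to showing $g_{q-1}(\alpha) \ne \prod_{i=1}^s \alpha_i^{q-q/p}$ with probability at least $q^{-s}$ over $\alpha \in \Ff_q^s$, where $\deg g_{q-1} < s(q-q/p)$. I would prove this by induction on $s$, peeling off one variable at a time in exactly the same way: expand in $\alpha_s$, note the coefficient of $\alpha_s^{q-q/p-1}$ (or rather isolate the top relevant power) has degree strictly below $(s-1)(q-q/p)$ in $\alpha_1,\dots,\alpha_{s-1}$, and each peeling step costs a factor of at least $q/ (q/p) \cdot \tfrac{1}{q} $... more carefully, at each of the $s$ coordinates of the first block the "degree deficit" gives a disagreement probability of at least $\tfrac{q/p}{q} = \tfrac1p \ge \tfrac1q$ conditioned on the remaining coordinates, and the final coordinate $s+1$ gives a factor at least $\tfrac1q$; multiplying gives $\ge q^{-(s+1)}$. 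The base case $s=0$ is just: a polynomial of degree at most $q-2$ in one variable differs from $x^{q-1}$ at $\ge 1$ point, i.e.\ probability $\ge 1/q$, which is Lemma~\ref{lm: power sums} again.

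The main obstacle I anticipate is getting the bookkeeping of the degree deficits exactly right so that each of the $s+1$ peeling steps genuinely contributes an independent factor bounded below by $1/q$ (and that the degrees $q-q/p$ versus $q-1$ are handled consistently) — a naive Schwartz–Zippel bound would give something like $d^\star/q$ which is far too weak and in the wrong direction, so the argument must use the product/monomial structure of $x^{e^\star}$ rather than a black-box low-degree bound. An alternative, cleaner route that sidesteps the induction is the combinatorial nullstellensatz / Alon's lemma: since the monomial $\prod_{i=1}^s x_i^{q-q/p} x_{s+1}^{q-1}$ has each individual degree strictly less than $q$ and appears in $x^{e^\star} - g$ with coefficient $1 \ne 0$ (as $\deg g < |e^\star|_1$ means $g$ cannot contain it), $x^{e^\star} - g$ is not the zero function; then one quantifies the number of zeros via the structure of the exponent vector, choosing in the Combinatorial Nullstellensatz subsets $S_i \subseteq \Ff_q$ of sizes $q/p+1$ for $i \le s$ and $q$ for $i = s+1$ (each strictly larger than the corresponding individual degree in the distinguishing monomial), giving a witness set of size $(q/p+1)^s \cdot q$ on which $x^{e^\star} - g$ is nonzero somewhere, hence nonzero on a set of density at least $\frac{1}{(q/p+1)^s q} \cdot \frac{(q/p+1)^s q}{q^{s+1}}$-type count; I would double check which of these two approaches yields the cleaner constant $q^{-(s+1)}$ and present that one.
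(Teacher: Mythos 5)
Your first step — fixing $(\alpha_{s+2},\dots,\alpha_k)$ and reducing to the $(s+1)$-variable case — is exactly right and is where the paper starts. But from there the paper's argument is essentially one line, and you have overshot it: once the last $k-s-1$ coordinates are fixed, the restricted polynomial $g' - x^{e^\star}$ still contains the monomial $\prod_{i=1}^{s+1}x_i^{e_i^\star}$ with a nonzero coefficient (restriction cannot raise degree, and $\deg g' \le d^\star-1 < d^\star = \deg x^{e^\star}$, so this monomial cannot be cancelled). Since this is a nonzero polynomial in $s+1$ variables with all individual degrees at most $q-1$, it is not the zero function on $\Ff_q^{s+1}$, hence disagrees with zero at at least one of the $q^{s+1}$ points. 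Averaging over the choice of $(\alpha_{s+2},\dots,\alpha_k)$ gives $\Pr[g\ne x^{e^\star}]\ge q^{-(s+1)}$. That is the whole proof.

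Your variable-by-variable peeling can be made to work and does land on the same bound, but all of the bookkeeping (expanding in $x_{s+1}$, tracking the coefficient $g_{q-1}$, conditioning, then inducting on the first $s$ coordinates) is doing by hand what the single statement ``a nonzero $(s+1)$-variate polynomial with individual degrees $<q$ is nonzero somewhere in $\Ff_q^{s+1}$'' gives you for free. Each peeling step is really re-proving that fact one variable at a time. The Combinatorial Nullstellensatz alternative you sketch also works for the qualitative conclusion (nonvanishing somewhere), but the density computation at the end is muddled: once you know there is at least one nonzero point in $\Ff_q^{s+1}$, the bound $1/q^{s+1}$ is immediate and there is nothing to gain by choosing smaller grids $S_i$ of size $q/p+1$; moreover the relevant individual degree is $q-q/p$, not $q/p$, so those grid sizes would not even satisfy the Nullstellensatz hypothesis. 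Also, your side remark that a ``naive Schwartz--Zippel bound would give $d^\star/q$'' is beside the point — Schwartz--Zippel is a lower bound on nonvanishing probability, not an upper bound, and the paper needs no quantitative Schwartz--Zippel at all, only the existence of a single nonzero point in each slice.
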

\begin{proof}
    For any $(\alpha_{s+2}, \ldots, \alpha_k) \in \Ff_q^{k-s-1}$ the $s+1$-variate polynomial obtained by setting $x_i = \alpha_i$ for $s+2 \leq i \leq k$ into $g(x) - x^{e^{\star}}$ is nonzero as it contains the monomial $\prod_{i=1}^{s+1}x_i^{e^{\star}_i}$. Thus for any setting of the last $k-s-1$ variables, there exists $(\alpha_1, \ldots, \alpha_{s+1})$ such that $g(x) - x^{e^{\star}} \neq 0$ for $x = (\alpha_1, \ldots, \alpha_{s+1}, \alpha_{s+2}, \ldots, \alpha_k)$ and the lemma follows.
\end{proof}

\begin{lemma} \label{lm: solvable system}
    Choose $S \subseteq \Ff_q^k$ uniformly at random of size $Q(k)$. Then the probability that the above system of equations over $\{z_{\alpha}\}_{\alpha \in S}$ has no solution is at most $q^{-\binom{k+d^{\star}}{d^{\star}}}$.
\end{lemma}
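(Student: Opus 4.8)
The plan is to show that the system fails to be solvable only when a certain "bad event" about $S$ occurs, and then to bound the probability of that bad event by a union bound over a family of linear functionals, each controlled by Lemma~\ref{lm: Schwartz Zippel Modification}. First I would recall the standard linear-algebra dichotomy: the inhomogeneous system $\{\sum_{\alpha\in S} z_\alpha \alpha^{e'}=0\}_{e'\in\mathcal E}$, $\sum_{\alpha\in S} z_\alpha \alpha^{e^\star}=1$ has \emph{no} solution if and only if the target vector $(0,\dots,0,1)$ lies outside the column span of the coefficient matrix, equivalently, by Farkas/duality over the field $\Ff_q$, if and only if there is a linear dependence among the rows that certifies infeasibility: there exist coefficients $(c_{e'})_{e'\in\mathcal E}$ and $c^\star$ with $c^\star \neq 0$ such that $\sum_{e'\in\mathcal E} c_{e'}\alpha^{e'} + c^\star \alpha^{e^\star} = 0$ for every $\alpha\in S$. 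Rescaling so that $c^\star=-1$, infeasibility is equivalent to the existence of $g\in\mathcal G$ (namely $g = \sum_{e'} c_{e'} x^{e'}$, which has degree $\le d^\star-1$) with $g(\alpha) = \alpha^{e^\star}$ for all $\alpha\in S$, i.e.\ $g$ agrees with the monomial $x^{e^\star}$ on all of $S$.

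Next I would bound the probability of this event. Fix any single $g\in\mathcal G$. By Lemma~\ref{lm: Schwartz Zippel Modification}, $\Pr_{\alpha}[g(\alpha)\neq \alpha^{e^\star}]\ge q^{-(s+1)}$, so since the $Q(k)$ points of $S$ are drawn independently and uniformly (a uniformly random \emph{set} of that size behaves, for a union bound in the wrong direction, essentially like i.i.d.\ samples — or one can simply sample with replacement and only lose a negligible factor), the probability that $g$ agrees with $x^{e^\star}$ on every point of $S$ is at most $(1-q^{-(s+1)})^{Q(k)} \le \exp(-Q(k)/q^{s+1})$. Then I would union bound over all $g\in\mathcal G$: since $|\mathcal G|\le q^{\binom{k+d^\star}{d^\star}}$, the failure probability is at most
\[
q^{\binom{k+d^\star}{d^\star}}\cdot \exp\!\left(-\frac{Q(k)}{q^{s+1}}\right).
\]
Plugging in $Q(k) = 2q^{s+1}\log(q)\binom{k+d^\star}{d^\star}$ makes the exponential term $\exp(-2\log(q)\binom{k+d^\star}{d^\star}) = q^{-2\binom{k+d^\star}{d^\star}}$, so the whole bound is $q^{\binom{k+d^\star}{d^\star}}\cdot q^{-2\binom{k+d^\star}{d^\star}} = q^{-\binom{k+d^\star}{d^\star}}$, as claimed.

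The main obstacle I anticipate is the duality step: carefully justifying that infeasibility of the inhomogeneous linear system is \emph{exactly} captured by the existence of a degree-$(d^\star-1)$ polynomial agreeing with $x^{e^\star}$ on $S$, and in particular that the certificate can always be normalized to have $c^\star\neq 0$ (if $c^\star=0$ one gets only a dependence among the homogeneous equations, which does not certify infeasibility of the full system, so such certificates must be excluded — this is automatic because the homogeneous system always has the trivial solution and the extra equation is the only obstruction). A minor secondary point is the sampling-without-replacement versus i.i.d.\ discrepancy in the concentration bound; I would handle this either by noting that sampling a set of size $Q(k)$ stochastically dominates (in the relevant sense) the i.i.d.\ bound, or by adjusting constants, since it affects only lower-order terms. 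Everything else is a routine counting estimate.
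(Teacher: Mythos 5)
Your proof is correct and follows essentially the same route as the paper: characterize infeasibility as the existence of some $g\in\mathcal G$ agreeing with $x^{e^\star}$ on all of $S$, bound that event for a fixed $g$ via the modified Schwartz--Zippel lemma, then union bound over $\mathcal G$. The one worry you flag (sampling a set versus i.i.d.\ points) is in fact a non-issue here: for a fixed ``bad'' set $A=\{\alpha: g(\alpha)=\alpha^{e^\star}\}$ of density $p\le 1-q^{-(s+1)}$, the without-replacement probability $\prod_{i=0}^{Q(k)-1}\frac{|A|-i}{q^k-i}$ is termwise at most $p^{Q(k)}$, so the bound only improves.
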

\begin{proof}
The system will have a solution as long as the coefficients in the last equation are not in the span of the coefficients of the other equation. In other words, if
\[
    (\alpha^{e^{\star}})_{\alpha \in S} \notin \spa\{ (\alpha^{e'})_{\alpha\in S}\}_{e' \in \mathcal{E}}.
\]
However, the span on the right side is simply the set $\{(g(\alpha)_{\alpha \in S} \; | \; g \in \mathcal{G} \}$. Therefore the above event will happen if for every $g \in \mathcal{G}$ there exists at least one $\alpha \in S$ such that $g(\alpha) \neq \alpha^{e^{\star}}$. By Lemma~\ref{lm: Schwartz Zippel Modification}, for any individual $g$, we have
\[
\Pr_{S}[g(\alpha) = \alpha^{e^{\star}}, \forall \alpha \in S] \leq (1-q^{-(s+1)})^{Q(k)} \leq e^{-q^{-(s+1)}Q(k)}.
\]
Therefore, by a union bound, the probability that there is at least one $g \in \mathcal{G}$ equal to $x^{e^{\star}}$ on all of $S$ is at most
\[
|\mathcal{G}|e^{-q^{-(s+1)}Q(k)} \leq e^{-\log(q)\binom{k+d^{\star}}{d^{\star}}} \leq q^{-\binom{k+d^{\star}}{d^{\star}}}.
\qedhere
\]
\end{proof}

Finally, we note that if the system of equations we constructed is solvable, then there exists $h: \Ff_q^k \xrightarrow[]{} \Ff_q$ with support contained in $S$ that satisfies the conditions of Lemma~\ref{lm: sufficient local char q}, and is thus a local characterization of $\RM[n,q,d]$.

\begin{lemma}
    Choose $S \subseteq \Ff_q^k$ uniformly at random of size $Q(k)$. Then with probability at least $1 - q^{-\binom{k+d^{\star}}{d^{\star}}}$, there exists $h: \Ff_q^k \xrightarrow[]{} \Ff_q$ with support contained in $S$ such that $\RM[n,q,d] = \mathcal{F}_n(h)$.
\end{lemma}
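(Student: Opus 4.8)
The plan is to combine the previous two lemmas: Lemma~\ref{lm: solvable system}, which says that with probability at least $1 - q^{-\binom{k+d^{\star}}{d^{\star}}}$ the constructed linear system over $\{z_\alpha\}_{\alpha \in S}$ is solvable, together with Lemma~\ref{lm: sufficient local char q}, which gives sufficient conditions for a function $h$ to satisfy $\RM[n,q,d] = \mathcal{F}_n(h)$. So the whole proof is really a ``decoding'' of what a solution to the linear system gives us.

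First I would condition on the event that the system is solvable, which by Lemma~\ref{lm: solvable system} has probability at least $1 - q^{-\binom{k+d^{\star}}{d^{\star}}}$. Given a solution $\{z_\alpha\}_{\alpha \in S}$, define $h : \Ff_q^k \to \Ff_q$ by $h(\alpha) = z_\alpha$ for $\alpha \in S$ and $h(\alpha) = 0$ for $\alpha \notin S$; then $\supp(h) \subseteq S$ by construction. The next step is to read off the coefficient of each monomial $x^{e''}$ in $h$. By Lemma~\ref{lm: monomial inner product}, the coefficient of the monomial $x^{e''}$ in the (reduced) polynomial representation of $h$ is, up to the sign $(-1)^k$, equal to $\sum_{\alpha \in \Ff_q^k} h(\alpha)\alpha^{(q-1,\dots,q-1)-e''} = \sum_{\alpha \in S} z_\alpha \alpha^{(q-1,\dots,q-1)-e''}$. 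Hence, for every exponent vector $w$ with $\sum_\alpha z_\alpha \alpha^w = 0$, the monomial $x^{(q-1,\dots,q-1)-w}$ has zero coefficient in $h$, and the single monomial $x^{(q-1,\dots,q-1)-e^{\star}}$ has coefficient $(-1)^k \neq 0$.

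Now I translate the two bullet types of the system into the two bullet conditions of Lemma~\ref{lm: sufficient local char q}. The equations indexed by $e' \in \mathcal{E} = \{e : |e|_1 \le d^{\star}-1\}$ force the coefficient of $x^{(q-1,\dots,q-1)-e'}$ to vanish for every such $e'$; since $|(q-1,\dots,q-1)-e'|_1 = k(q-1) - |e'|_1 \ge k(q-1) - (d^{\star}-1)$, and $d^{\star} - 1 = s(q-q/p) + (q-1) - 1 \ge d+1$... here I should be a bit careful: I want to conclude $\deg(h) \le k(q-1) - (d+1)$, which requires that every monomial of degree strictly greater than $k(q-1)-(d+1)$, i.e.\ every $x^{(q-1,\dots,q-1)-w}$ with $|w|_1 \le d$, has zero coefficient; these are exactly the $w$ with $|w|_1 \le d \le d^{\star}-1$, so they are all in $\mathcal{E}$ and are killed. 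This gives the first condition. For the second condition, I need $h$ to contain the monomial $\big(\prod_{i=1}^s x_i^{q/p-1}\big) x_{s+1}^{q-1-r'} \big(\prod_{j=s+2}^k x_j^{q-1}\big)$ for all $r \le r' \le q-1$; writing this monomial as $x^{(q-1,\dots,q-1)-w}$ one computes $w = (q-q/p,\dots,q-q/p, r', 0,\dots,0)$ with the block of $q-q/p$'s of length $s$. For $r' = q-1$ this is exactly $e^{\star}$, and the equation $\sum_\alpha z_\alpha \alpha^{e^{\star}} = 1$ makes that coefficient nonzero. For $r \le r' < q-1$, however, the coefficient of that monomial is $\sum_\alpha z_\alpha \alpha^{w}$ with $|w|_1 = s(q-q/p) + r' < s(q-q/p) + (q-1) = d^{\star}$, hence $|w|_1 \le d^{\star}-1$ and $w \in \mathcal{E}$, so the system forces this coefficient to be \emph{zero}, not nonzero.

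I expect this last point to be the main subtlety, and it means that, strictly speaking, the solution $h$ of the system only literally satisfies the $r' = q-1$ instance of the second condition; one then has to recover the remaining instances $r \le r' < q-1$ by the monomial-shift machinery, exactly as in the proof of Lemma~\ref{lm: sufficient local char q} (where the single monomial with $r' = q-1$ was enough because $x^{e^{\star}}$ with $e^{\star}_{s+1} = q-1$ dominates all the lower $r'$ via Lemma~\ref{lm: monomial shift}). Concretely: rather than invoking Lemma~\ref{lm: sufficient local char q} as a black box, I would re-run its second claim's argument directly with this particular $h$. We have $\deg(h) \le k(q-1)-(d+1)$ from the first set of equations (so $\RM[n,q,d] \subseteq \mathcal{F}_n(h)$ is immediate), and $x^{e^{\star}} \notin \mathcal{F}_n(h)^{\perp}$ in the relevant sense because $\langle x^{e^{\star}} \circ T, h\rangle \ne 0$ for $T$ the projection onto the first $k$ coordinates. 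Then, given any putative $g \in \mathcal{F}_n(h)$ of degree $> d$, the shift argument of the second claim produces a monomial of the form $\big(\prod_{i=1}^s x_i^{q-q/p}\big)x_{s+1}^{r'}$ with $r \le r' \le q-1$ in $\mathcal{F}_n(h)$; by Lemma~\ref{lm: monomial shift} (shifting degree from coordinate $s+1$ up to the cap $q-1$, or noting $r' \le_p q-1$ is automatic) we can push $r'$ up to $q-1$, obtaining $\big(\prod_{i=1}^s x_i^{q-q/p}\big)x_{s+1}^{q-1} = x^{e^{\star}} \in \mathcal{F}_n(h)$, contradicting $\langle x^{e^{\star}}\circ T, h\rangle \ne 0$. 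Hence $\mathcal{F}_n(h) \subseteq \RM[n,q,d]$, so $\RM[n,q,d] = \mathcal{F}_n(h)$, and the lemma follows with the stated probability. (If instead one insists on quoting Lemma~\ref{lm: sufficient local char q} verbatim, the clean fix is to enlarge $\mathcal{E}$ so as not to include the vectors $(q-q/p,\dots,q-q/p,r',0,\dots,0)$ for $r \le r' < q-1$, but since $|\mathcal{E}|$ only appears through the crude bound $\binom{k+d^{\star}}{d^{\star}}$ and these few vectors are already counted there, this changes nothing in the quantitative statement.)
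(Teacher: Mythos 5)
Your decomposition into ``condition on solvability via Lemma~\ref{lm: solvable system}, then decode the solution into an $h$ satisfying Lemma~\ref{lm: sufficient local char q}'' is the right frame, and you correctly spotted the real subtlety: the function $h_1$ defined by $h_1(\alpha)=z_\alpha$ on $S$ does \emph{not} satisfy the second condition of Lemma~\ref{lm: sufficient local char q}, since the vectors $w_{r'}=(q-q/p,\dots,q-q/p,r',0,\dots,0)$ with $r\le r'<q-1$ lie in $\mathcal{E}$, so the system forces the corresponding monomial coefficients of $h_1$ to vanish. The fix you propose, however, does not work.

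Your claim that Lemma~\ref{lm: monomial shift} lets you ``push $r'$ up to $q-1$'' is invalid: Lemma~\ref{lm: monomial shift} moves degree between coordinates while preserving total degree, and Lemma~\ref{lm: shadow closed} (shadow closure) only lets you descend, never ascend, in the $p$-shadow order. Indeed the intended conclusion $\RM[n,q,d]=\mathcal{F}_n(h_1)$ is false in general: take $q=4$, $p=2$, $d=2$, so $s=1$, $d^{\star}=5$, $e^{\star}=(2,3,0,\dots,0)$. For any affine $T$, $x_1^3\circ T$ has degree at most $3\le d^{\star}-1$, so every monomial of $x_1^3\circ T$ is annihilated by $h_1$, giving $x_1^3\in\mathcal{F}_n(h_1)\setminus\RM[n,4,2]$. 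Concretely, the shift argument hands you $x_1^2 x_2\in\mathcal{F}_n(h_1)$ while $x^{e^{\star}}=x_1^2x_2^3\notin\mathcal{F}_n(h_1)$, and no contradiction is reached. The missing step is the paper's multiplication trick: set $h=h_1\cdot\bigl(\sum_{r'=r}^{q-1}x_{s+1}^{q-1-r'}\bigr)$. This is a pointwise product, so $\supp(h)\subseteq\supp(h_1)\subseteq S$, and it manufactures exactly the monomials demanded by the second condition of Lemma~\ref{lm: sufficient local char q}; one then verifies these are not cancelled by the lower-degree part of $h_1$. Your parenthetical alternative of altering $\mathcal{E}$ also fails as stated: deleting the $w_{r'}$ merely leaves those coefficients unconstrained rather than forcing them nonzero, and adding several ``$=1$'' equations would require a new joint-solvability argument in place of the single spanning argument in Lemma~\ref{lm: solvable system}.
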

\begin{proof}
   By Lemma~\ref{lm: solvable system} with probability at least  $1 - q^{-\binom{k+d^{\star}}{d^{\star}}}$ the previously described system of equations is solvable.  Suppose this is the case, and abusing notation, let $\{z_{\alpha}\}_{\alpha \in S}$ denote the solution. Define $h_1: \Ff_q^k \xrightarrow[]{} \Ff_q$ so that $h_1(\alpha) = z_{\alpha}$ if $\alpha \in S$ and $h_1(\alpha) = 0$ if $\alpha \notin S$. Then each of the previous equations states the value of an inner product involving $h_1$. In particular we have $\langle h, x^{e} \rangle = 0$ for all $e \in \{0,\ldots, q-1\}^{k}$ such that $|e|_1 \leq d^{\star}-1$ and $\langle h, x^{e^{\star}} \rangle = 0$.   By Lemma~\ref{lm: monomial inner product}, $h_1$ contains the monomial $\prod_{i=1}^{s}x_i^{q/p-1} \prod_{j =s+2}^{k} x_{j}^{q-1}$ and all other monomials in $h_1$ are degree at most $(q-1)\cdot k - (d^{\star}+1)$. We may write, 
   \[
   h_1(x) = \prod_{i=1}^{s}x_i^{q/p-1} \prod_{j =s+2}^{k} x_{j}^{q-1} + g(x),
   \]
   where $\deg(g) \leq k(q-1) - (d^{\star}+1)$.

   To complete the proof, take
    \[
    h = h_1 \cdot \left(\sum_{r'=r}^{q-1}  x_{s+1}^{q-1-r'}\right).
    \]
    From  $\prod_{i=1}^{s}x_i^{q/p-1} \prod_{j =s+2}^{k} x_{j}^{q-1}\left(\sum_{r'=r}^{q-1}  x_{s+1}^{q-1-r'}\right) $ we get each of the monomials in the second condition of Lemma~\ref{lm: sufficient local char q} and these cannot get cancelled out by $g(x) \left(\sum_{r'=r}^{q-1}  x_{s+1}^{q-1-r'}\right)$. Moreover the highest degree monomial of $h$ is
    \[
    \left(\prod_{i=1}^{s}x_i^{q/p-1}\right) x_{s+1}^{q-1-r}\prod_{j =s+2}^{k} x_{j}^{q-1},
    \]
    and its degree is precisely $k(q-1) - (d+1)$, so the first condition is satisfied as well.

    Therefore we show that with probability at least $1 - q^{-\binom{k+d^{\star}}{d^{\star}}}$ over $S$, there exists $h$ with support contained in $S$ such that $\RM[n,q,d] = \mathcal{F}_n(h)$.
\end{proof}
We are now ready to present our basic tester, described in Algorithm~\ref{alg: basic tester q}. Henceforth, we will fix $$k = 100d^{\star}\log_q\left(\frac{100tq}{\delta} \right).$$
For $t \leq \frac{\delta}{100}q^{n/(100d^{\star}) - 1}$, we have $k \leq n$ and our basic tester is valid. The following bound on $Q(k)$ will be helpful for computations:
\[
Q(k) \leq 2q^{\frac{d+1}{q-q/p} + 1}\log(q)\left(300 \log_q\left(\frac{100tq}{\delta} \right)\right)^{d^{\star}}.
\]

\begin{algorithm}[!ht]
\caption{A Basic Tester over $\Ff_q$}
\label{alg: basic tester q}
\begin{algorithmic}[1]
\Procedure{GeneralRandomPoints}{$f, d, k$}
\State Choose $T \in \T_{n,k}$ uniformly at random.
\State Choose $S \subseteq \Ff_q^k$ of size $Q(k)$ uniformly at random.
\State Query $f \circ T(x)$ for each $x \in S$.
\State $\mathcal{H}(S) \gets $ the set of $h$ with support contained in $S$ that satisfy $\RM[n,q,d] = \mathcal{F}_n(h)$.
\If{ $\mathcal{H}(S)$ is empty}
\State \Return Accept.
\EndIf

\State Choose $h \in \mathcal{H}(S)$ uniformly at random. 
\If{ $\inner{f \circ T}{h} = 0$}
\State \Return Accept.
\EndIf
\If{ $\inner{f \circ T}{h} \neq 0$}
\State \Return Reject.
\EndIf
\EndProcedure
\end{algorithmic}
\end{algorithm}

\begin{lemma} \label{lm: basic soundness q}
    Suppose $f: \Ff_q^k \xrightarrow[]{} \Ff_q$ is $\delta$-far from $\RM[n,q,d]$. Then Algorithm~\ref{alg: basic tester q} on inputs $f,d,k$ rejects with probability at least
    $\frac{\delta}{5Q(k)^2}$.
\end{lemma}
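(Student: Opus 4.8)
The plan is to reuse, essentially verbatim, the soundness argument of the prime-field case (Lemma~\ref{lm: random points tester}), the one new ingredient being that the family $\mathcal{H}(S)$ of admissible local characterizations supported inside $S$ may now be empty, so a small multiplicative factor must be conceded.

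First I would make the randomness of Algorithm~\ref{alg: basic tester q} explicit: it draws $T\in\T_{n,k}$ uniformly, then \emph{independently} draws $S\subseteq\Ff_q^k$ uniformly among the sets of size $Q(k)$, and, whenever $\mathcal{H}(S)\neq\emptyset$, draws $h$ uniformly from $\mathcal{H}(S)$. The output is Reject exactly when $\mathcal{H}(S)\neq\emptyset$ and $\inner{f\circ T}{h}\neq 0$. The crucial structural observation is that $T$ is independent of the pair $(S,h)$.

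Next I fix an arbitrary $S$ with $\mathcal{H}(S)\neq\emptyset$ and an arbitrary $h\in\mathcal{H}(S)$, and lower bound $\Pr_T[\inner{f\circ T}{h}\neq 0]$. By the definition of $\mathcal{H}(S)$, such an $h$ is a local characterization with $\mathcal{F}_n(h)=\RM[n,q,d]$, hence $\delta(f,\mathcal{F}_n(h))=\delta_d(f)\ge\delta$; moreover, since the support of $h$ lies in $S$, evaluating $\inner{f\circ T}{h}$ queries $f\circ T$ at no more than $|S|=Q(k)$ points, so the parameter $Q$ in Theorem~\ref{th: local char to test} is at most $Q(k)$. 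Since the bound $\tfrac{\delta}{4Q^2}$ there is increasing in the distance parameter and decreasing in $Q$, it gives $\Pr_T[\inner{f\circ T}{h}\neq 0]\ge\frac{\delta}{4Q(k)^2}$; and because $T$ is independent of $(S,h)$, this inequality survives conditioning on $(S,h)$.

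Finally I average over $(S,h)$. Since for every admissible pair the conditional rejection probability is at least $\frac{\delta}{4Q(k)^2}$, the overall rejection probability is at least $\Pr_S[\mathcal{H}(S)\neq\emptyset]\cdot\frac{\delta}{4Q(k)^2}$. By the lemma immediately preceding this one, $\Pr_S[\mathcal{H}(S)\neq\emptyset]\ge 1-q^{-\binom{k+d^{\star}}{d^{\star}}}$, and since $k$ is large (in particular $\binom{k+d^{\star}}{d^{\star}}\ge 2$, while $q\ge 4$ in the non-prime case) this is at least $\tfrac45$. Multiplying gives a rejection probability of at least $\tfrac45\cdot\frac{\delta}{4Q(k)^2}=\frac{\delta}{5Q(k)^2}$, which is the claim. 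I do not anticipate a genuine obstacle here: the possibility $\mathcal{H}(S)=\emptyset$ is the only new phenomenon relative to the prime case, and it is already quantified by the preceding lemma, after which the argument is routine independence-and-averaging bookkeeping. The two points meriting a little care are (i) feeding Theorem~\ref{th: local char to test} the query count ($\le Q(k)$) rather than the monomial support of $h$, and (ii) using the monotonicity of its bound to replace $\delta_d(f)$ by the hypothesized lower bound $\delta$.
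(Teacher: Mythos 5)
Your proof is correct, and it follows the same overall skeleton as the paper's: condition on $\mathcal{H}(S)\neq\emptyset$, apply Theorem~\ref{th: local char to test} (with $Q\le Q(k)$ since $\supp(h)\subseteq S$) and the independence of $T$ from $(S,h)$ to get conditional rejection probability $\ge\frac{\delta}{4Q(k)^2}$, then absorb the possibility $\mathcal{H}(S)=\emptyset$ using Lemma~\ref{lm: solvable system}. The one place you diverge is the last bookkeeping step, and your version is the more robust of the two. The paper bounds the failure event \emph{additively}, writing $\Pr[\text{reject}]\ge\frac{\delta}{4Q(k)^2}-\Pr[\mathcal{H}(S)=\emptyset]$ and then arguing $q^{-\binom{k+d^{\star}}{d^{\star}}}\le\frac{\delta}{100Q(k)^2}$; that chain quietly uses $q\ge 1/\delta$ in its first step, which is not guaranteed. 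You instead factor \emph{multiplicatively}, $\Pr[\text{reject}]\ge\Pr[\mathcal{H}(S)\neq\emptyset]\cdot\frac{\delta}{4Q(k)^2}$, for which the crude bound $\Pr[\mathcal{H}(S)=\emptyset]\le q^{-\binom{k+d^{\star}}{d^{\star}}}\le 1/16\le 1/5$ already suffices (using $q\ge 4$ and $\binom{k+d^{\star}}{d^{\star}}\ge 2$, both of which hold comfortably with the paper's choice of $k$). This multiplicative route avoids any dependence on $\delta$ relative to $q$ and does not need the specific magnitude of $k$ to tame the bad event, so it is both simpler and cleaner; the conclusions match.
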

\begin{proof}
    Suppose that $\mathcal{H}(S)$ is nonempty. Then by Theorem~\ref{th: local char to test} combined with an argument identical to Lemma~\ref{lm: random points tester}, the tester rejects with probability at least 
    $\frac{\delta}{4Q(k)^2}$.
    The probability that $\mathcal{H}(S)$ is empty is at most 
    \[
    q^{-\binom{k+d^{\star}}{d^{\star}}} \leq \frac{\delta}{q^{\binom{k+d^{\star}}{d^{\star}}-1}} \leq \frac{\delta}{q^{100d + 10d\log(k)}} \leq \frac{\delta}{100Q(k)^2}.
    \]
    Hence, we may subtract this probability out and get that $f$ is rejected with probability at least $\frac{\delta}{5Q(k)^2}$.
\end{proof}

Finally, repeating our basic tester yields a $t$-online-erasure resilient tester for $\RM[n,q,d]$.

\begin{algorithm}[!ht]
\caption{A $t$-online-erasure resilient tester over $\Ff_q$.}
\label{alg: fq test}
\begin{algorithmic}[1]
\Procedure{GeneralErasureResilient}{$f, d, \delta$}
\State Set $k = 100dt\log \left( \frac{100dt}{\delta}\right) + q$.
\For{$i = 1 \to \frac{100Q(k)^2}{\delta}$} \State Run \texttt{GeneralRandomPointsTest}$(f,d,k)$
\State If Reject is outputted, \Return Reject.
\EndFor
\State \Return Accept.
\EndProcedure
\end{algorithmic}
\end{algorithm}

\begin{theorem}
    Given $f: \Ff_q^n \xrightarrow[]{} \Ff_q$, degree parameter $d$, and a distance parameter $\delta$ in the $t$-online erasure model, Algorithm~\ref{alg: fq test} outputs accept with probability $1$ if $\deg(f)\leq d$ and outputs reject with probability $2/3$ if $f$ is $\delta$-far from degree $d$. The query complexity of Algorithm~\ref{alg: fq test} is 
    $\frac{100Q(k)^3}{\delta}$.
\end{theorem}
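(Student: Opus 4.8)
The plan is to mirror the structure of the prime-field proof, adapting it to handle the extra loss coming from the possibility that $\mathcal{H}(S)$ is empty. First I would establish completeness: if $\deg(f)\le d$, then for every $T\in\T_{n,k}$ and every $h\in\mathcal{H}(S)$ the function $h$ is a local characterization of $\RM[n,q,d]$, so by Definition~\ref{def: local char} we have $\inner{f\circ T}{h}=0$; hence Algorithm~\ref{alg: basic tester q} always returns accept in each of its iterations, and Algorithm~\ref{alg: fq test} returns accept with probability $1$. (This also uses that $k\le n$, which holds under the stated bound on $t$.)

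Next I would handle soundness. Suppose $f$ is $\delta$-far from $\RM[n,q,d]$. By Lemma~\ref{lm: basic soundness q}, a single run of \texttt{GeneralRandomPoints} rejects with probability at least $\frac{\delta}{5Q(k)^2}$. Since Algorithm~\ref{alg: fq test} runs the basic tester independently $\frac{100Q(k)^2}{\delta}$ times (each run using a fresh random $T$ and $S$, so the runs are independent), the probability that none of them rejects is at most $\left(1-\frac{\delta}{5Q(k)^2}\right)^{100Q(k)^2/\delta}\le e^{-20}$, which is far below $1/3$. Thus, ignoring erasures, $f$ is rejected with probability at least $2/3$ (in fact $1-e^{-20}$).

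It then remains to argue that erasures do not interfere. The key observation is exactly the one used in the prime-field case: each query made by the tester is of the form $f\circ T(x)=f(T(x))$ where $x\in S\subseteq\Ff_q^k$ and $T$ is a uniformly random affine map $\Ff_q^n\to\Ff_q^k$; since a uniformly random affine map sends any fixed point to a uniformly random point of $\Ff_q^n$ (as $k\ge 1$), each individual query lands on a uniformly random point of $\Ff_q^n$. The total number of queries over the whole algorithm is $\frac{100Q(k)^3}{\delta}$ (that is, $\frac{100Q(k)^2}{\delta}$ iterations times $Q(k)$ queries each), so the adversary erases at most $t$ times this many points total. By a union bound over all queries, the probability that any queried point has been erased is at most $\frac{100Q(k)^3t}{\delta}\cdot\frac{100Q(k)^3t/\delta}{q^k}$. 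Plugging in $k=100d^{\star}\log_q(100tq/\delta)$ and the bound $Q(k)\le 2q^{(d+1)/(q-q/p)+1}\log(q)\left(300\log_q(100tq/\delta)\right)^{d^{\star}}$, the numerator is polynomial in $\log(t/\delta)$ while $q^k=(100tq/\delta)^{100d^{\star}}$ is an enormous polynomial in $t/\delta$, so this probability is at most a small constant, say $1/100$. Conditioning on the event that no queried point is erased, the analysis above applies verbatim, and the final rejection probability is at least $2/3-1/100\ge 2/3$ after adjusting constants (or one simply notes $1-e^{-20}-1/100\ge 2/3$).

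The main obstacle, and the step requiring the most care, is the erasure-avoidance calculation: one must verify that the chosen value of $k$ is large enough that $q^k$ dominates the cube (or relevant power) of $Q(k)$ times $t/\delta$, while simultaneously keeping $k\le n$ so that the affine maps $T\in\T_{n,k}$ are well-defined — this is precisely where the hypothesis $t\le\frac{\delta}{100}q^{n/(100(d+q))-1}$ is used, together with $d^{\star}\le d+q$. One should also double-check that $\mathcal{H}(S)$ being empty was already folded into the $\frac{\delta}{5Q(k)^2}$ bound of Lemma~\ref{lm: basic soundness q}, so no further loss is incurred here. The query-count bookkeeping ($\frac{100Q(k)^3}{\delta}$ total, hence $\frac{q^{O(1)}}{\delta}\log^{3d+3q}(t/\delta)$) is then routine.
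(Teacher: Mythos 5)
Your proposal follows the same overall route as the paper: completeness from the local-characterization property, soundness by repeating the basic tester (Lemma~\ref{lm: basic soundness q}), and then a union bound over queries to show that erasures are hit with probability at most $1/100$. The one place where your reasoning does not quite support your conclusion is the erasure-avoidance step. You justify the bound by saying that ``a uniformly random affine map sends any fixed point to a uniformly random point of $\Ff_q^n$, so each individual query lands on a uniformly random point of $\Ff_q^n$.'' That is true \emph{marginally} for any single query, but it is not true \emph{conditionally on the adversary's view}: the adversary erases adaptively based on the already-revealed values $T(x_1),\dots,T(x_{i-1})$, and once roughly $k+1$ affinely independent points of $T$ have been revealed, the image $T(\Ff_q^k)$ is pinned down and subsequent queries are confined to that $q^k$-point subspace, not uniform over $\Ff_q^n$. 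What one actually needs, and what the paper uses, is the weaker but robust statement that, \emph{conditioned on the history (and even on $T$ itself)}, the next query is uniform over the at least $q^k - Q(k)$ unqueried points of the $k$-dimensional affine subspace, because $S$ is a uniformly random $Q(k)$-subset of $\Ff_q^k$. Your final formula in fact already has $q^k$ in the denominator (matching the paper), so the computation goes through; it is only the stated justification that is inconsistent with it and does not, as written, withstand the adaptive adversary.

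Two small further points: the inequality ``$2/3-1/100\ge 2/3$'' is of course false as written; you clearly intended $1-e^{-20}-1/100\ge 2/3$ (or the paper's $3/4-1/100\ge 2/3$). And the phrase ``conditioning on the event that no queried point is erased'' is better stated as a subtraction, $\Pr[\text{reject}]\ge\Pr[\text{reject without erasures}]-\Pr[\text{some query erased}]$, since conditioning on a non-erasure event could in principle distort the distribution of $T,S$; the subtraction form, which the paper uses, sidesteps this.
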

\begin{proof}
    The query complexity is clear, and it is also clear from Lemma~\ref{lm: basic soundness q} that if  $\deg(f) \leq d$ then Algorithm~\ref{alg: fq test} outputs accept with probability $1$.
    
    To see the soundness, note that if every query is obtained successfully, i.e.\ no queried point is erased, then by Lemma~\ref{lm: basic soundness q}, reject is outputted with probability at least $3/4$ when $f$ is $\delta$-far from degree $d$. It remains to upper bound the probability that any query is erased. The total number of erased points is at most $\frac{Q(k)^3t}{\delta}$
    and each query is uniformly random in some $k$-dimensional affine subspace. Therefore by a union bound, the probability that any individual query is erased is at most,
    \[
     \frac{Q(k)^3t}{\delta q^k} \cdot \frac{Q(k)^3t}{\delta} \leq \frac{100q^{6d}\log(q)^{6}(300\log_q(100tq/\delta))^{6d^{\star}}}{\delta^2\left(100tq/\delta \right)^{100d^{\star}}} \leq \frac{1}{100}.
    \]
    Overall, this shows that if $f$ is $\delta$-far from degree $d$ then  Algorithm~\ref{alg: fq test} outputs rejects with probability at least $3/4 - 1/100 \geq 2/3$.
\end{proof}
\section{Online-Corruption Resilient Testers}
As noted, our testers also work with two sided error with the same parameters in the $t$-online-corruption model, since the probability that a corrupted point is queried at all by our algorithms is at most $1/100$. Recall that in this model, the adversary is allowed to alter entries of the input function's truth table, $f(x)$, and thus one can only hope for testers with two sided error as the probability of a querying a corrupted point must now be subtracted from both the completeness and the soundness.

A formal reduction from online-erasure resilient to online-corruption resilient testing is given in \cite[Lemma 1.8]{KRV}. Applying this lemma to our results yields the following theorems.

\begin{theorem} [Prime Field Case] \label{th: p corruption}
    Let $f: \Ff_p^n \xrightarrow[]{} \Ff_p$ be the input function over a prime-field vector space, $d$ be the degree parameter, and $\delta$ be the distance parameter. Then for $t \leq \frac{\delta}{30}p^{n/(20d)}$, there is a $t$-online-corruption resilient tester with query complexity    $O\left(\frac{\left(\log(t/\delta)\right)^{3d+3}}{\delta} \right)$
    satisfying:
    \begin{itemize}
        \item \textbf{Completeness:} If $f$ is degree $d$ then the algorithm outputs accept with probability at least $2/3$.
        \item \textbf{Soundness:} If $f$ is $\delta$-far from degree $d$ then the algorithm outputs reject with probability at least $2/3$.
    \end{itemize}
\end{theorem}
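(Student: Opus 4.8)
The plan is to derive Theorem~\ref{th: p corruption} as an immediate consequence of Theorem~\ref{th: main p} together with the erasure-to-corruption reduction of \cite[Lemma 1.8]{KRV}. That reduction says, roughly, that any $t$-online-erasure-resilient tester which queries a corrupted point with probability at most $\gamma$ is automatically a $t$-online-corruption-resilient tester, at the cost of an additive loss of $\gamma$ in both the completeness and the soundness guarantees. Thus the whole task reduces to exhibiting a small upper bound $\gamma$ on the probability that Algorithm~\ref{alg: fp test}, run against a corruption adversary, ever queries a point whose value has been altered.

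First I would observe that the bound on $\gamma$ is obtained by exactly the same union-bound computation that already appears in the soundness analysis of Algorithm~\ref{alg: fp test}. Against a $t$-per-query corruption adversary, after $\frac{100Q(k)^2}{\delta}$ runs of \texttt{RandomPoints} (each making $Q(k)$ queries) at most $\frac{100Q(k)^3 t}{\delta}$ entries of the truth table are ever corrupted. Since every individual query the tester makes is a uniformly random point of a $k$-dimensional affine subspace of $\Ff_p^n$ (of size $p^k$), a single query lands on a corrupted entry with probability at most $\frac{100Q(k)^3 t}{\delta p^k}$; union bounding over all $\frac{100Q(k)^2}{\delta}$ queries and plugging in $k = 20d\log_p(30t/\delta)$ together with $Q(k) \le (60\log_p(30t/\delta))^{d+1}$ gives $\gamma \le \frac{1}{100}$, identically to the proof of Theorem~\ref{th: main p}.

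With $\gamma \le 1/100$ in hand I would finish as follows. Conditioned on the tester never querying a corrupted entry, it behaves exactly as it would when run on $f$ in the plain (no-erasure, no-corruption) model, so in the completeness case ($\deg(f)\le d$) it accepts with probability $1$, and in the soundness case ($f$ is $\delta$-far from degree $d$) it rejects with probability at least $3/4$ by the iterated version of Lemma~\ref{lm: random points tester}. Removing the at-most-$1/100$ failure probability coming from hitting a corrupted point, the tester accepts with probability at least $99/100 \ge 2/3$ in the completeness case and rejects with probability at least $3/4 - 1/100 \ge 2/3$ in the soundness case. The query complexity is unchanged, namely $O\!\left(\frac{(\log(t/\delta))^{3d+3}}{\delta}\right)$.

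The only point that requires any care — and it is the same point that drives the erasure analysis — is that the tester's queries are uniform over large affine subspaces, so that a corruption budget polynomial in $\log(t/\delta)/\delta$ cannot concentrate on the small random set of points that actually gets queried. Once this is noted, the corruption case follows from the erasure case with no new ideas; there is no substantive obstacle beyond bookkeeping the additive $1/100$ loss.
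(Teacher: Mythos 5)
Your proposal is correct and matches the paper's approach exactly: the paper derives Theorem~\ref{th: p corruption} by invoking the erasure-to-corruption reduction of \cite[Lemma 1.8]{KRV} together with the already-established bound (from the soundness analysis of Algorithm~\ref{alg: fp test}) that the probability of ever querying a corrupted point is at most $1/100$. Your write-up simply fills in more of the bookkeeping than the paper bothers to spell out, but the route is the same.
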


\begin{theorem} [Non-Prime Field Case] \label{th: q corruption}
    Let $f: \Ff_q^n \xrightarrow[]{} \Ff_q$ be the input function, $d$ be the degree parameter, and $\delta$ be the distance parameter. Then for $t \leq \frac{\delta}{100}q^{\frac{n}{100(d+q)} - 1}$, there is a $t$-online-corruption resilient tester with query complexity $\frac{q^{O(1)}}{\delta} O\left(\left(\log(t/\delta)\right)^{3d+3q}\right)$
    satisfying:
    \begin{itemize}
        \item \textbf{Completeness:} If $f$ is degree $d$ then the algorithm outputs accept with probability at least $2/3$.
        \item \textbf{Soundness:} If $f$ is $\delta$-far from degree $d$ then the algorithm outputs reject with probability at least $2/3$.
    \end{itemize}
\end{theorem}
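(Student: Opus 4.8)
The plan is to design nothing new: I would run \emph{verbatim} the $t$-online-erasure-resilient tester of Theorem~\ref{th: main q} (Algorithm~\ref{alg: fq test}) inside the $t$-online-corruption model, and argue that with probability at least $1-\tfrac{1}{100}$ the adversary never manages to corrupt a value that the tester actually reads. Since the tester is a deterministic function of its query answers and its internal randomness, on this event it behaves exactly as it would in the corruption-free model, so completeness and soundness can each drop by at most the probability of reading a touched point. Formally this is precisely the content of the erasure-to-corruption reduction \cite[Lemma~1.8]{KRV}, so after setting things up the only quantitative task is to verify that this ``bad'' probability is at most $\tfrac{1}{100}$ against every corruption strategy.

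First I would reproduce the counting already done in the erasure analysis. Algorithm~\ref{alg: fq test} makes $100 Q(k)^3/\delta$ queries with $k = 100 d^{\star}\log_q(100 t q/\delta)$, so at any point of the execution the adversary can have corrupted at most $100 Q(k)^3 t/\delta$ truth-table entries. Each query is a fresh uniformly random point of some $k$-dimensional affine subspace of $\Ff_q^n$, i.e.\ uniform over a set of $q^k$ points, even after conditioning on the entire history (all earlier queries, answers, and corruptions). Hence the probability that a fixed query lands on a currently-corrupted point is at most $100 Q(k)^3 t/(\delta q^k)$, and a union bound over all queries gives a total of at most $\bigl(100 Q(k)^3 t/\delta\bigr)^2/q^k$. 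Plugging in the bound on $Q(k)$ recorded before Algorithm~\ref{alg: basic tester q} together with the hypothesis $t \le \tfrac{\delta}{100} q^{n/(100(d+q))-1}$ (which also guarantees $k \le n$, so the tester is well defined), this is at most $\tfrac{1}{100}$; the computation is identical to the one in the proof following Algorithm~\ref{alg: fq test}, only with ``erased'' replaced by ``corrupted''.

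Given this, I would conclude via \cite[Lemma~1.8]{KRV}. On the good event the tester runs exactly as in the clean model, where it accepts with probability $1$ when $\deg(f)\le d$ (by bilinearity of the inner product and Lemma~\ref{lm: monomial inner product}) and rejects with probability at least $\tfrac34$ when $f$ is $\delta$-far from degree $d$ (by Lemma~\ref{lm: basic soundness q} and the $100 Q(k)^2/\delta$-fold repetition in Algorithm~\ref{alg: fq test}). Therefore the corruption-model completeness is at least $1-\tfrac{1}{100}\ge\tfrac23$ and the corruption-model soundness is at least $\tfrac34-\tfrac{1}{100}\ge\tfrac23$. The query complexity is unchanged, namely $100 Q(k)^3/\delta = \tfrac{q^{O(1)}}{\delta}\,O\!\bigl((\log(t/\delta))^{3d+3q}\bigr)$, which is the claimed bound; the prime-field statement (Theorem~\ref{th: p corruption}) follows in exactly the same way from Theorem~\ref{th: main p}.

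The step I expect to be the real (and essentially only) obstacle is making the union bound airtight against an \emph{adaptive} corruption adversary, i.e.\ one that decides where to corrupt after seeing every query so far. This is not hard but must be stated carefully: the number of corrupted entries present when the $i$-th query is issued is a deterministic function of $i$ (at most $(i-1)t$) and is independent of \emph{where} those corruptions sit, whereas the $i$-th query point is uniform over a set of $q^k$ points conditioned on the full preceding transcript; so the per-query bound, and hence the union bound, hold whatever the adversary does. Everything else is the black-box reduction of \cite[Lemma~1.8]{KRV} together with arithmetic already carried out for the erasure case.
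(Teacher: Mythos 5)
Your proposal is correct and follows the same route the paper takes: run Algorithm~\ref{alg: fq test} unchanged in the corruption model, re-use the same union-bound computation from the erasure analysis to argue the probability of ever querying a touched point is at most $1/100$, and then invoke the erasure-to-corruption reduction of \cite[Lemma~1.8]{KRV} to convert perfect completeness into two-sided $2/3$ guarantees. The paper states this almost as a one-liner, whereas you spell out the adaptivity point (that the per-query bound holds conditioned on the full transcript because each query is uniform over a $q^k$-size set) — a useful clarification but not a different argument.
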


\section{Further Directions}
In this work we take a step towards investigating problems in algebraic property testing within the online erasure model. Regarding further directions, recall the gap in the $q=2$ case between our upper bound and the lower bound of  \cite{BKMR} discussed in Section~\ref{sec: related work}. Closing this gap is an intriguing avenue for future work.

Another potential direction is to develop testers for other algebraic properties or to show generally that local characterizations (as defined in Definition~\ref{def: local char}) yield local tests in the online erasure model. The latter result in the classical model is the main theorem of Kaufman and Sudan's seminal work on algebraic property testing \cite{KS}. Their main theorem is restated in this paper as Theorem~\ref{th: local char to test}.

In our current work, we require there to be many local characterizations so that any large enough set of points in a $k$-dimensional affine subspace will contain the support of some local characterization with probability close to $1$. Naively, one could try to make the same kind of argument work for a \textit{single} characterization and appeal to the Density Hales-Jewett Theorem \cite{FK, DHJ}. This would give that for a fixed local characterization $h$ and any fraction $c > 0$ there is some dimension $k_c$ such that any set of $c q^{k_c}$ points in a $k$-dimensional affine subspace will contain the support of $h$ (where $h$ is extended to the entire subspace in some manner). However, in order for this approach to work even with a constant erasure parameter, one would need $c q^{k_c} < q^{k_c/2}$, and such a statement is far stronger than what is guaranteed by the Density Hales-Jewett theorem.

Yet another intriguing direction is testing other algebraic properties that are not necessarily affine-invariant, for example, the dual BCH property which is equivalent to being the trace of a low degree polynomial from $\Ff_{q^n} \xrightarrow[]{} \Ff_q$. The set of such functions still has a 2-transitive symmetry group but this group is much less rich than the affine-invariant symmetries that we work with. 

Finally, it may be possible that our techniques can answer other questions regarding Reed-Muller codes in the online erasure model. For example, it could be interesting to explore analogues of other well studied tasks in coding theory like local decoding or local correction for Reed-Muller codes. We remark that local decoding in an (offline) erasure model appears in \cite{RRV}.
\section{Acknowledgements}
This work was done in part while the authors were visiting the Simons Institute for the Theory of Computing.
\bibliographystyle{plain}
\bibliography{references}
\end{document}